\newcommand {\R} {\rm I\!R}
\newtheorem{condition}{\textbf{Condition}}{\bfseries}{\itshape}
\newtheorem{definition}{Definition}
\newtheorem{proof}{Proof}
\newtheorem{theorem}{Theorem}
\newtheorem{remark}{Remark}
\thanks{INRIA Sophia Antipolis,
2004 Route des Lucioles, Sophia Antipolis, France, +33(4)92387751, k.avrachenkov@sophia.inria.fr}%
\thanks{University of Avignon, 339 Chemin des Meinajaries,
Avignon, France, +33(4)90843518, oussama.habachi@univ-avignon.fr}%
\thanks{Department of Mathematical Sciences, University of
Liverpool, Liverpool, UK, +44(151)7944737, piunov@liv.ac.uk}
\thanks{Department of Mathematical Sciences, University of
Liverpool, Liverpool, UK, +44(151)7944761, zy1985@liv.ac.uk}
\begin{document}
\makeRR   
\section{Introduction}

Recently, there has been a steady increase in the demand for QoS
(Quality of Services) and fairness among the increasing number of
IP (Internet Protocol) flows. With respect to QoS, a plethora of
research focuses on smoothing the throughput of AIMD (Additive
Increase Multiplicative Decrease)-based congestion control for the
Transmission Control Protocol (TCP), which is prevailingly
employed in today's transport layer communication. These
approaches adopt various congestion window updating policies to
determine how to adapt the congestion window size to the network
environment. Besides, there have been proposals of new high speed
congestion control algorithms that can efficiently utilize the
available bandwidth for large volume data transfers, see
\cite{Floyd03,TKelly04,Leith04,Wei06}. Although TCP gives
efficient solutions to end-to-end error control and congestion
control, the problem of fairness among flows is far from being
solved. See for example, \cite{AAP05,Moller07,Li07} for the
discussions of the unfairness among various TCP versions.

The fairness can be improved by the Active Queue Management (AQM)
through the participation of links or routers in the congestion
control. The first AQM scheme, the Random Early Drop (RED), is
introduced in \cite{RED}, and allows to drop packets before the
buffer overflows. The RED was followed by a plethora of AQM
schemes; a survey of the most recent AQM schemes can be found in
\cite{Adams}. However, the improvement in fairness provided by
AQMs is, on the one hand, still not satisfactory; and, on the
other hand, at the core of the present paper.

Since most of the currently operating TCP versions exhibit a saw-tooth
like behavior, it appears that the setting of impulsive control
is very well suited for the Internet congestion control.
Furthermore, since the end users expect
permanent availability of the Internet, it looks natural to consider
the infinite time horizon setting. With the best of our efforts, we
could not find any available results about infinite time horizon
optimal impulsive control problems. Thus, a general theory for infinite
time horizon optimal impulsive control needs to be developed.
We note that the results available in \cite{Miller03} and references
therein about finite time horizon optimal impulsive control problems
cannot directly be applied to the infinite horizon with non-decreasing
energy of impulses. In \cite{Miller03} the impulsive control is described
with the help of Stieltjes integral with respect to bounded variation
function. Clearly, the bounded variation function cannot represent
an infinite number of impulses with non-decreasing energy.

Therefore, in the first part of the paper, we develop Bellman equation based
approach for infinite time horizon optimal impulsive control problems.
We consider both discounted and time average criteria. We establish very general
and at the same time natural conditions under which a canonical control triplet
produces an optimal feedback policy.

Then, in the second part of the paper we apply the developed general results
to the Internet congestion control. The network performance is
measured by the long-run average $\alpha$-fairness and the
discounted $\alpha$-fairness, see \cite{Mo00}, which can be specified to
the total throughput, the proportional fairness and the max-min
fairness maximization with the particular values of the tuning
parameter $\alpha$. The model in the present paper is different
from the existing literature on the network utility maximization
see e.g., \cite{Kunniyur03,Kelly98,Low99}, in at least two
important aspects: (a) we take into account the fine, saw-tooth like,
dynamics of congestion control algorithms, and we suggest the use
of per-flow control and describe its form. Indeed, not long ago a
per-flow congestion control was considered infeasible. However,
with the introduction of modern very high speed routers, the
per-flow control becomes realistic, see \cite{Noirie09}. (b) By solving
rigorously the impulsive control problems, we propose a novel AQM
scheme that takes into account not only the traffic transiting
through bottleneck links but also end-to-end congestion control
algorithms implemented at the edges of the network. More
specifically, our scheme asserts that a congestion
notification (packet drop or explicit congestion notification)
should be sent out whenever the current sending rate is over a threshold,
whose closed-form expression is computed.

The remainder of this paper proceeds as follows. In the next
section we give preliminary results regarding general average and
discounted impulsive optimal control problems. In Section
\ref{TCP}, we describe the mathematical models for the congestion
control, and solve the underlying optimal impulsive control problems
based on the results obtained in Section \ref{TCP}. Section \ref{con}
concludes the paper.

\section{Preliminary result}\label{opt}
In this section, we establish the verification theorems for a
general infinite horizon impulsive control problem under the
long-run average criterion and the discounted criterion, which are
then used to solve the concerned Internet congestion control problems
in the next section.

\subsection{Description of the controlled process}
Let us consider the following dynamical system in $X\subseteq
\R^n$ (with $X$ being a nonempty measurable subset of
$\mathbb{R}^n$, and some initial condition $x(0)=x_0\in X$)
governed by
\begin{equation}\label{e1}
dx=f(x,u)dt,
\end{equation}
where $u\in U$ is the gradual control, with $U$ being an arbitrary
nonempty Borel space. Suppose another nonempty Borel space $V$ is
given, and, at any time moment $T$, if he decides so,
 {the decision maker} can apply an impulsive
control $v\in V$ leading to the following {new state}:
\begin{equation}\label{e2}
x(T)=j(x(T^-),v),
\end{equation}
where $j$ is a measurable mapping from $\mathbb{R}^n\times V$ to
$X.$ Thus, we have the next definition of a policy.

\begin{definition}\label{r1}
A policy $\pi$  is defined by a $U$-valued measurable {mapping}
$u(t)$ and a sequence of impulses $\{T_i,v_i\}_{i=1}^\infty$ with
$v_i\in V$ and $\dots\ge T_{i+1}\ge T_i\ge 0,$ which satisfies
$T_0:= 0$ and $\lim_{i\to\infty} T_i=\infty$. A policy $\pi$ is
called a feedback one if one can write{\footnote{Here the
superscript $f$ stands for ``feedback''.}} $u(t)=u^f(x(t))$,
$T_i^{\cal L}=\inf\{t> T_{i-1}:~x(t)\in{\cal L}\}$,
$v_i=v^{f,{\cal L}}(x(T_i^-))$, where $u^{f}$ is a $U$-valued
measurable {mapping} on $\mathbb{R}^n,$ and ${\cal L}\subset X$ is
a specified (measurable) subset of $X$. A feedback policy is
completely characterized and thus denoted by the triplet
$(u^f,{\cal L}, v^{f,{\cal L}})$.
\end{definition}

We are interested in the (admissible) policies $\pi$ under which
the following hold (with any initial state). (a) $T_0\le
T_1<T_2<\ldots$. \footnote{In the case when two (or more) impulses
$v_i$ and $v_{i+1}$ are applied simultaneously, that is
$T_{i+1}=T_i$, we formulate this as a single impulse $\hat v$ with
the effect $j(x,\hat v):=j(j(x,v_i),v_{i+1})$, and include $\hat
v$ into the set $V$.} (b) The controlled process $x(t)$ described
by (\ref{e1}) and (\ref{e2}) is well defined: for any initial
state $x(0)=x_0$, there is a unique piecewise differentiable
function $x^\pi(t)$ with $x^\pi(0)=x_0$, satisfying (\ref{e1}) for
all $t,$ wherever the derivative exists; satisfying (\ref{e2}) for
all $T=T_i$, $i=1,2,\ldots;$ and satisfying that $x^\pi(t)$ is
continuous at each $t\ne T_i.$ (c) Within a finite interval, there
are no more than finitely many impulsive controls. The controlled
process under such a policy $\pi$ is denoted by $x^\pi(t)$.

\subsection{Optimal impulsive control problem and Bellman equation}

Let $c(x,u)$ be the reward rate if the controlled process is at
the state $x$ and the gradual control $u$ is applied, and $C(x,v)$
be the reward earned from applying the impulsive control $v$.
Under the policy $\pi$ and initial state $x_0$, the average reward
is defined by
\begin{eqnarray}\label{Ze1}
J(x_0,\pi)&=&\liminf_{T\rightarrow\infty}\frac{1}{T}\left\{\int_0^T
c(x^\pi(t),u(t))dt+\sum_{i=1}^{N(T)} C(x^\pi(T_i^-),v_i)\right\},
\end{eqnarray}
where and below $N(T):=\sup{\{n>0,T_n\leq T\}}$, and
$x(T_0^-):=x_0$; and the discounted reward (with the discount
factor $\rho>0$) is given by
\begin{equation*}
J_\rho(x_0,\pi)=\liminf_{T\to\infty} J^T_\rho(x_0,\pi),
\end{equation*}
where
\begin{equation}\label{e4}
J^T_\rho(x_0,\pi)=\int_0^T e^{-\rho t}
c(x^\pi(t),u(t))dt+\sum_{i=1,2,\ldots T_i\in[0,T]} e^{-\rho T_i}
C(x^\pi(T_i-0),v_i).
\end{equation}
We only consider the class of (admissible) policies $\pi$ such
that the right side of (\ref{Ze1}) (resp., (\ref{e4})) is well
defined under the average (resp., discounted) criterion, i.e., all
the limits and integrals are finite, which is automatically the
case, e.g., when $C$ and $c$ are bounded functions. The optimal
control problem under the average criterion reads
\begin{eqnarray}\label{ZYPiuK}
J(x_0,\pi)\rightarrow\max_{\pi},
\end{eqnarray}
and the one under the discounted criterion reads
\begin{equation}\label{e3}
J_\rho(x_0,\pi)\to\max_\pi.
\end{equation}
A policy $\pi^\ast$ is called (average) optimal (resp.,
(discounted) optimal) if $J (x_0,\pi^\ast)=\sup_{\pi}J(x_0,\pi )$
(resp., $J_\rho (x_0,\pi^\ast)=\sup_{\pi}J_\rho(x_0,\pi )$) for
each $x_0\in X.$ Below we consider both problems (\ref{Ze1}) and
(\ref{e3}), and provide the corresponding verification theorems
for an optimal feedback policy, see Theorems \ref{ZYt2} and
\ref{t1}.

For the average problem (\ref{Ze1}), we consider the following
condition.
\begin{condition}\label{ZYCA}
There are a continuous function $h(x)$ on $X$ and a constant
$g\in\mathbb{R}$ such that the following hold.

\par\noindent(i) The gradient $\frac{\partial h}{\partial x}$ exists
everywhere apart from a subset ${\cal D}\subset X,$ whereas under
every policy $\pi$ and for each initial state $x_0,$ $h(x^\pi(t))$
is absolutely continuous on $[T_i,T_{i+1})$, $i=0,1,\dots;$ and
$\{t\in[0,\infty): x^\pi(t)\in {\cal D}\}$ is a null set with
respect to the Lebesgue measure.

\par\noindent (ii) For all $x\in X\setminus{\cal D}$,
\begin{eqnarray}\label{ZYBellmanAverage}
&&\max\left\{\sup_{u\in U}\left[ c(x,u)-g+\langle\frac{\partial
h}{\partial x},f(x,u)\rangle \right],~\sup_{v\in V}\left[
C(x,v)+h(j(x,v))-h(x)\right]\right\}=0,
\end{eqnarray}
and for all $x\in {\cal D}$, $ \sup_{v\in V}\left[
C(x,v)+h(j(x,v))-h(x)\right]\le 0. $

\par\noindent (iii) There are a measurable subset ${\cal L}^*\subset X$ and a feedback policy $\pi^*=(u^{f*},{\cal L}^*,
v^{f,{\cal L}^*})$ such that for all $x\in X\setminus({\cal
D}\cup{\cal L}^*),$
  $
  c(x,u^{f*}(x))-g+\langle\frac{\partial h}{\partial x}, f(x,u^{f^*}(x)\rangle=0
  $
and for all $x\in{\cal L^*},$ $
  C(x,v^{f,{\cal L}^*})+h(j(x,v^{f,{\cal L}^*(x)}))-h(x)=0,
$ and $j(x,v^{f,{\cal L}^*}(x))\notin{\cal L^\ast}$.

\par\noindent(iv) For any policy $\pi$ and each
initial state $x_0\in X$, {\small{$\limsup_{T\rightarrow\infty}
\frac{h(x^\pi(T))}{T}\ge 0, $ whereas
$\limsup_{T\rightarrow\infty} \frac{h(x^{\pi^*}(T))}{T}=0.$}}
\end{condition}
Equation (\ref{ZYBellmanAverage}) is the Bellman equation for
problem (\ref{Ze1}). $(g,\pi^\ast,h)$ from Condition \ref{ZYCA} is
called a canonical triplet, and the policy $\pi^\ast$ is called a
canonical policy. The next result asserts that any canonical
policy is optimal for problem (\ref{ZYPiuK}).
\begin{theorem}\label{ZYt2} For the average problem (\ref{ZYPiuK}), any feedback policy $\pi^*$
satisfying Condition \ref{ZYCA} is optimal, and $g$ in Condition
\ref{ZYCA} is the value function, i.e., $g=\sup_{\pi}J(x_0,\pi)$
for each $x_0\in X.$
\end{theorem}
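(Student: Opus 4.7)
The plan is a classical verification-theorem argument. The heart of the proof is to integrate the Bellman inequality along the controlled trajectory and sum up the impulse inequalities to get a telescoping identity, from which the average reward is dominated by $g$; equality is then recovered for the canonical policy using the selection condition (iii).

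First I would fix an arbitrary admissible policy $\pi$ with impulse times $0=T_0\le T_1<T_2<\ldots$. On each interimpulse interval $[T_i,T_{i+1})$ the map $t\mapsto h(x^\pi(t))$ is absolutely continuous by Condition \ref{ZYCA}(i), hence a.e.\ differentiable with derivative $\langle \partial h/\partial x, f(x^\pi(t),u(t))\rangle$, since $\{t:x^\pi(t)\in\mathcal{D}\}$ is Lebesgue null. Using inequality (\ref{ZYBellmanAverage}) pointwise on $X\setminus\mathcal{D}$, I obtain
\begin{equation*}
\int_{T_i}^{T_{i+1}} c(x^\pi(t),u(t))\,dt \;\le\; g(T_{i+1}-T_i)\;-\;\bigl[h(x^\pi(T_{i+1}^-))-h(x^\pi(T_i))\bigr].
\end{equation*}
At each impulse time, the second part of (\ref{ZYBellmanAverage}) (valid on all of $X$, via the extension stated after the display) gives $C(x^\pi(T_{i+1}^-),v_{i+1})\le h(x^\pi(T_{i+1}^-))-h(x^\pi(T_{i+1}))$.

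Next I would add these two bounds, obtaining a telescoping cancellation of the $h(x^\pi(T_{i+1}^-))$ terms. Summing for $i=0,\dots,N(T)-1$ and adjoining the analogous inequality on the terminal stub $[T_{N(T)},T]$ (where $x^\pi$ is continuous), all intermediate $h$-values cancel and I arrive at
\begin{equation*}
\int_0^T c(x^\pi(t),u(t))\,dt+\sum_{i=1}^{N(T)}C(x^\pi(T_i^-),v_i)\;\le\; gT + h(x_0) - h(x^\pi(T)).
\end{equation*}
Dividing by $T$, letting $T\to\infty$, and using Condition \ref{ZYCA}(iv) that $\limsup_{T\to\infty} h(x^\pi(T))/T\ge 0$, I conclude $J(x_0,\pi)\le g$ for every admissible $\pi$.

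For the reverse bound I consider the canonical policy $\pi^*=(u^{f*},\mathcal{L}^*,v^{f,\mathcal{L}^*})$. By definition of $\mathcal{L}^*$ and the non-reentrance property $j(x,v^{f,\mathcal{L}^*}(x))\notin\mathcal{L}^*$ from Condition \ref{ZYCA}(iii), the trajectory $x^{\pi^*}(t)$ stays in $X\setminus(\mathcal{D}\cup\mathcal{L}^*)$ for a.e.\ $t$ between impulses and lies in $\mathcal{L}^*$ precisely at the pre-impulse instants $T_i^{*-}$. Hence (iii) turns both the gradient inequality and the impulse inequality into equalities at every relevant point, and the display above becomes an equality for $\pi^*$. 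Dividing by $T$, taking the $\liminf$, and invoking the second half of (iv) (namely $\limsup_{T\to\infty}h(x^{\pi^*}(T))/T=0$) yields $J(x_0,\pi^*)=g$. Combined with the upper bound, this proves that $\pi^*$ is optimal and that $g=\sup_\pi J(x_0,\pi)$ for every $x_0$.

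The only real obstacle is the care needed around non-differentiability of $h$: one must argue that absolute continuity of $h\circ x^\pi$ on interimpulse intervals, together with the null-set property of $\{t:x^\pi(t)\in\mathcal{D}\}$, genuinely justifies writing $h(x^\pi(T_{i+1}^-))-h(x^\pi(T_i))$ as the Lebesgue integral of $\langle \partial h/\partial x,f\rangle$; everything else is bookkeeping. The selection/non-reentrance clause in (iii) is what makes the verification argument close cleanly for a feedback policy rather than merely an open-loop one.
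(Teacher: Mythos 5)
Your proposal is correct and follows essentially the same route as the paper: both integrate the Bellman inequality (\ref{ZYBellmanAverage}) along the trajectory (the paper via the global identity (\ref{ZAverage1}), you via an interval-by-interval telescoping sum, which is the same computation), then divide by $T$ and invoke Condition \ref{ZYCA}(iv) for the upper bound and the equality under $\pi^*$. No substantive difference in method or in the handling of the non-differentiability set $\mathcal{D}$.
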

\begin{proof}   For each arbitrarily fixed $T>0$,
initial state $x_0\in X$ and policy $\pi$, it holds that
\begin{eqnarray}\label{ZAverage1}
h(x^\pi(T))&=&h(x_0)+\int_0^T \left\{ \langle\frac{\partial
h}{\partial x}(x^\pi(t)), f(x^\pi(t),u(t))\rangle \right\} dt\nonumber\\
&& +\sum_{i:T_i\in[0,T]} \left\{
h(j(x^\pi(T_i^-),v_i))-h(x^\pi(T_i^-))\right\}.
\end{eqnarray}
Therefore,
  \begin{eqnarray*}
&&\int_0^T c(x^\pi(t),u(t))dt+\sum_{i=1}^{N(T)}
C(x^\pi(T_i^-),v_i)+
h(x^\pi(T))\\
&=&h(x_0)+\int_0^T \left\{ c(x^\pi(t),u(t))+\langle\frac{\partial
h}{\partial x}(x^\pi(t)),
f(x^\pi(t),u(t))\rangle \right\} dt\\
&&+\sum_{i:T_i\in[0,T]} \left\{
h(j(x^\pi(T_i^-),v_i))-h(x^\pi(T_i^-))
+C(x^\pi(T_i^-),v_i)\right\}\le h(x_0)+\int_0^T g dt,
\end{eqnarray*}
where the last inequality is because of (\ref{ZYBellmanAverage})
and the definition of $g$ and $h$ as in Condition \ref{ZYCA}. It
follows that $\frac{1}{T}\left\{\int_0^T
c(x^\pi(t),u(t))dt+\sum_{i=1}^{N(T)} C(x^\pi(T_i^-),v_i)\right\}+
\frac{h(x^\pi(T))}{T}\le \frac{h(x_0)}{T}+g,$ and consequently, $
J(x_0,\pi)+\limsup_{T\rightarrow\infty}\frac{h(x^\pi(T))}{T} \le
g. $ Since {$\limsup_{T\rightarrow\infty}\frac{h(x^\pi(T))}{T}
\ge0$} for each $\pi,$ we obtain $ J(x_0,\pi)\le g $ for each
policy $\pi.$ For the feedback policy $\pi^\ast$ from Condition
\ref{ZYCA}, since
{$\limsup_{T\rightarrow\infty}\frac{h(x^{\pi^\ast}(T))}{T} =0,$}
and we have  $J(x_0,\pi^\ast)= g.$ The statement is proved.
\end{proof}

For the discounted problem (\ref{e3}), we formulate the following
condition.

\begin{condition}\label{c1} There is a continuous function $W(x)$ on $X$
such that the following hold.

\par\noindent (i) Gradient $\frac{\partial W}{\partial x}$ exists everywhere
apart from a subset ${\cal D}\subset X\subset \R^n$; for any
policy $\pi$ and for any initial state $x_0,$ the function
$W(x^\pi(t))$ is absolutely continuous on all intervals
$[T_{i-1},T_i)$, $i=1,2,\ldots$; and the Lebesgue measure of the
set $\{t\in[0,\infty):~x^\pi(t)\in{\cal D}\}$ equals zero.

\par\noindent (ii) The following Bellman equation
\begin{equation}\label{e5}
\max\left\{\sup_{u\in U}\left[ c(x,u)-\rho
W(x)+\langle\frac{\partial W}{\partial x},f(x,u)\rangle
\right],~~\sup_{v\in V}\left[
C(x,v)+W(j(x,v))-W(x)\right]\right\}=0.
\end{equation}
is satisfied for all $x\in X\setminus{\cal D}$ and $\sup_{v\in
V}\left[ C(x,v)+W(j(x,v))-W(x)\right]\le 0$ for all $x\in{\cal
D}.$

\par\noindent (iii) There are a measurable subset ${\cal L}^*\subset X$ and a
feedback policy $\pi^*=(u^{f*}(x),{\cal L}^*, v^{f,{\cal L}^*})$
such that $c(x,u^{f*}(x))-\rho W(x)+\langle\frac{\partial
W}{\partial x}, f(x,u^{f^*}(x)\rangle=0$ for all $x\in
X\setminus({\cal D}\cup{\cal L}^*)$ and
  $C(x,v^{f,{\cal L}^*}(x))+W(j(x,v^{f,{\cal L}^*}(x)))-W(x)=0$
for all $x\in{\cal L^*}$; moreover, $g(x,v^{f,{\cal L}^*}(x))\in
X\setminus{\cal L}^*$.

\par\noindent (iv) For any initial state   $x_0\in X$,
  $\limsup_{T\to\infty} e^{-\rho T} W(x^\pi(T))\ge 0$ for any policy $\pi$
and

$\limsup_{T\to\infty} e^{-\rho T} W(x^{\pi^*}(T))=0.$
\end{condition}

\begin{theorem}\label{t1}
For the discounted problem (\ref{e3}), any feedback policy $\pi^*$
satisfying Condition \ref{c1} is optimal, and $\sup_\pi
J_\rho(x_0,\pi)=W(x_0)=J_\rho(x_0,\pi^*)$ for each $x_0\in X$.
\end{theorem}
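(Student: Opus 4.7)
My plan is to imitate the verification argument already carried out for Theorem \ref{ZYt2}, with the discount factor $e^{-\rho t}$ replacing the ergodic correction $gt$, and with $W$ in place of $h$. The spine of the argument will be a pathwise identity for $e^{-\rho T}W(x^\pi(T))$ along an arbitrary admissible trajectory, into which the Bellman inequality (\ref{e5}) is then inserted termwise.

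First I would fix $T>0$, $x_0\in X$ and an arbitrary admissible $\pi$, and write the discounted analogue of (\ref{ZAverage1}). By Condition \ref{c1}(i), the function $t\mapsto e^{-\rho t}W(x^\pi(t))$ is absolutely continuous on each inter-impulse interval $[T_{i-1},T_i)$, and the set of times at which $x^\pi(t)\in\mathcal D$ has zero Lebesgue measure, so $\frac{\partial W}{\partial x}$ can be used under the integral sign. Admissibility condition (c) ensures that only finitely many impulses occur in $[0,T]$, so the jumps telescope legitimately, giving
\begin{equation*}
e^{-\rho T}W(x^\pi(T)) = W(x_0) + \int_0^T e^{-\rho t}\left[\left\langle \tfrac{\partial W}{\partial x}(x^\pi(t)),f(x^\pi(t),u(t))\right\rangle - \rho W(x^\pi(t))\right]dt + \sum_{T_i\in[0,T]} e^{-\rho T_i}\bigl[W(j(x^\pi(T_i^-),v_i)) - W(x^\pi(T_i^-))\bigr].
\end{equation*}

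Next I would add $J_\rho^T(x_0,\pi)$, defined in (\ref{e4}), to both sides. Each integrand then becomes $e^{-\rho t}\bigl[c(x^\pi(t),u(t)) - \rho W(x^\pi(t)) + \langle\tfrac{\partial W}{\partial x},f(x^\pi(t),u(t))\rangle\bigr]$, which is $\le 0$ for almost every $t$ by the first branch of (\ref{e5}); each jump contribution becomes $e^{-\rho T_i}\bigl[C(x^\pi(T_i^-),v_i) + W(j(x^\pi(T_i^-),v_i)) - W(x^\pi(T_i^-))\bigr]$, which is $\le 0$ by the second branch of (\ref{e5}), including the extension to $\mathcal D$ provided in Condition \ref{c1}(ii). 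Hence $J_\rho^T(x_0,\pi) + e^{-\rho T}W(x^\pi(T)) \le W(x_0)$. Taking $\liminf_{T\to\infty}$ and invoking $\limsup_{T\to\infty} e^{-\rho T}W(x^\pi(T)) \ge 0$ from Condition \ref{c1}(iv), I conclude $J_\rho(x_0,\pi) \le W(x_0)$ for every admissible $\pi$.

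Finally, for the canonical feedback $\pi^*$, Condition \ref{c1}(iii) converts both nonpositive quantities above into exact zeros; the clause $j(x,v^{f,\mathcal L^*}(x))\in X\setminus\mathcal L^*$ rules out an instantaneous re-entry into $\mathcal L^*$ and hence simultaneous impulses, so the identity becomes an equality $J_\rho^T(x_0,\pi^*) + e^{-\rho T}W(x^{\pi^*}(T)) = W(x_0)$. Combined with $\limsup_{T\to\infty} e^{-\rho T}W(x^{\pi^*}(T)) = 0$, this yields $J_\rho(x_0,\pi^*) = W(x_0)$, establishing both optimality of $\pi^*$ and the identification of $W$ with the value function. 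The only genuinely delicate step I foresee is the pathwise identity of Step 1 --- specifically, justifying the telescoping of jumps via admissibility (c) and ensuring that the exceptional set $\mathcal D$ contributes nothing to the integral; everything else is a direct transcription of the verification argument for Theorem \ref{ZYt2}.
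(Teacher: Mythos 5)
Your proposal is correct and follows essentially the same route as the paper: the authors likewise transplant the verification argument of Theorem \ref{ZYt2}, replacing the identity (\ref{ZAverage1}) by exactly the pathwise representation of $e^{-\rho T}W(x^\pi(T))$ that you derive in your first step, then insert the Bellman inequality (\ref{e5}) termwise and conclude via Condition \ref{c1}(iv). No discrepancies to report.
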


\begin{proof}
The proof proceeds along the same line of reasoning as in that of
Theorem \ref{ZYt2}; instead of (\ref{ZAverage1}), one should now
make use of the representation
\begin{eqnarray*}
0&=&W(x_0)+\int_0^T e^{-\rho t}\left\{\langle\frac{\partial
W(x^\pi(t))}{\partial x}, f(x^\pi(t),u(t))\rangle -\rho
W(x^\pi(t))\right\} dt\\
&&+\sum_{i=1,2,\ldots T_i\in[0,T]}e^{-\rho T_i}\left\{
W(g(x^\pi(T_i-0),v_i)-W(x^\pi(T_i-0))\right\}-e^{-\rho
T}W(x^\pi(T)).
\end{eqnarray*}
\end{proof}

\section{Applications of the optimal impulsive control theory to the Internet congestion control}\label{TCP}

In this section, we firstly informally describe the impulsive
control problem for the Internet congestion control, which will then be later formalized
in the framework of the previous section. Let us consider $n$ TCP
connections operating in an Internet Protocol (IP) network of $L$
links defined by a routing matrix $A$, {whose} element $a_{lk}$ is
equal to one if connection $k$ goes through link $l$, {or zero
otherwise.}\footnote{Without loss of generality, we assume that
each link is occupied by some connection, and each connection is
routed through some link.} Denote by $x_k(t)$ the sending rate of
connection $k$ at time $t$. We also denote by $P(k)$ the set of
links corresponding to the path of connection $k$. In this
section, the column vector notation
 {$x(t):=(x_1(t),\dots,x_n(t))^T$} is in use.

The data sources are allowed to use different TCP versions, or if
they use the same TCP, the TCP parameters (round-trip time, the
increase-decrease factors) can be different. Therefore, we suppose
that the sending rate of connection $k$ evolves according to the
following equation
\begin{eqnarray}\label{Kostia1}
\frac{d}{dt} x_k(t) = a_k x^{\gamma_k}(t),
\end{eqnarray}
in the absence of congestion notification, and the TCP reduces the
sending rate abruptly if a congestion notification is sent to the
source $k$, i.e., when a congestion notification is sent to the
source $k$ at time moment $T_{i,k}$ with $T_{0,k}:=0$ and
$T_{i+1,k}\ge T_{i,k},$ its sending rate is reduced as follows
\begin{eqnarray}\label{Kostia2}
x_k(T_{i,k}) = b_k x_k(T_{i,k}^-) < x_k(T_{i,k}^-).
\end{eqnarray}
Here and below, $a_k$, $b_k$ and $\gamma_k$ are constants, which
cover at least two important versions of the TCP end-to-end
congestion control; if $\gamma_k=0$ we retrieve the AIMD
congestion control mechanism (see \cite{A10}), and if $\gamma_k=1$
we retrieve the Multiplicative Increase Multiplicative Decrease
(MIMD) congestion control mechanism (see \cite{TKelly04,Yi10}).
Also note that (\ref{Kostia1}) and (\ref{Kostia2}) correspond to a
hybrid model description that represents well the saw-tooth
behaviour of many TCP variants, see \cite{Hespanha01,A10,Yi10}.

When $T_{i+2,k}>T_{i+1,k}=T_{i,k}> T_{i-1,k}$, multiple (indeed,
two in this case) congestion notifications are being sent out simultaneously at
$T_{i+1,k}=T_{i,k}$; as explained in the previous section, we will
understand such multiple reductions on the sending rate as a
single ``big'' impulsive control. In this section we write
$T_i:=(T_{i,1},\dots,T_{i,n})$ for the $i$th time moments of the
impulsive control for each of the $n$ connections, and assume that
the decision of reducing the sending rate of connection $k$ is
independent upon the other connections. Since there is no gradual
control, we tentatively call the sequence of $T_1,T_2,\dots$ a
policy for the congestion control problem, which will be
formalized below.

We will consider two performance measures of the system; namely
the time average $\alpha$-fairness function
\begin{eqnarray*}
\bar{J}(x_0) = \liminf_{T \to \infty} \frac{1}{1-\alpha}
\sum_{k=1}^n \frac{1}{T} \int_0^T x^{1-\alpha}_k(t) dt,
\end{eqnarray*}
and the discounted $\alpha$-fairness function
\begin{eqnarray*}
\bar{J}(x_0) = \liminf_{T \rightarrow \infty} \frac{1}{1-\alpha}
\sum_{k=1}^n  \int_0^T e^{-\rho T} x^{1-\alpha}_k(t) dt,
\end{eqnarray*}
to be maximized over the consecutive moments of sending congestion
notifications
$T_i,i=1,2,\dots$. In the meanwhile, due to the limited capacities
of the links, the expression  $ \liminf_{T \to \infty} \frac{1}{T}
\int_0^T A x(t) dt$ (resp., $ \liminf_{T \to \infty} \int_0^T
e^{-\rho t} A x(t) dt$) under the average (resp., discounted)
criterion should not be too big. Therefore, after introducing the
weight coefficients $\lambda_1,\dots,\lambda_L\ge 0,$ we consider
the following objective functions to be maximized:
\begin{eqnarray}\label{eq:laverage}
&&\bar{L}(x_1,\dots,x_n) = \sum_{k=1}^n \left\{\liminf_{T \to
\infty} \frac{1}{T}\int_0^T \frac{x^{1-\alpha}_k(t)}{1-\alpha}
dt\right\} -\sum_{l=1}^L \lambda_l  \sum_{k:l\in P(k)} \liminf_{T
\to \infty} \frac{1}{T}\int_0^T x_k(t) dt
\end{eqnarray}
in the average case, and
\begin{eqnarray}\label{eq:ldiscounted}
\bar{L}_\rho(x_1,\dots,x_n) = \sum_{k=1}^n \left\{\liminf_{T \to
\infty} \int_0^T e^{-\rho t} \frac{x^{1-\alpha}_k(t)}{1-\alpha}
dt\right\}-\sum_{l=1}^L \lambda_l  \sum_{k:l\in P(k)} \liminf_{T
\to \infty} \int_0^T e ^{-\rho t}x_k(t) dt
\end{eqnarray}
in the discounted case, where we recall that $P(k)$ indicates the
set of links corresponding to connection $k$.
We can interpret the second terms in (\ref{eq:laverage}) and (\ref{eq:ldiscounted})
as ``soft'' capacity constraints.

Below we obtain the optimal policy for the problems
\begin{eqnarray}\label{ExtendedZ1}
\bar{L}(x_1,\dots,x_n)\rightarrow \max_{T_1,T_2,\dots}.
\end{eqnarray}
and
\begin{eqnarray}\label{DiscountedZ2}
\bar{L}_\rho(x_1,\dots,x_n)\rightarrow \max_{T_1,T_2,\dots},
\end{eqnarray}
respectively.

\subsection{Solving the average optimal impulsive control problem
for the Internet congestion control}
We first consider in this subsection the average problem
(\ref{ExtendedZ1}). Concentrated on policies satisfying
\begin{eqnarray*}&&\liminf_{T
\to \infty} \frac{1}{1-\alpha} \sum_{k=1}^n \frac{1}{T} \int_0^T
x^{1-\alpha}_k(t) dt=\lim_{T \to \infty} \frac{1}{1-\alpha}
\sum_{k=1}^n \frac{1}{T} \int_0^T x^{1-\alpha}_k(t) dt<\infty
\end{eqnarray*}
and $ \liminf_{T \to \infty}\frac{1}{T}\int_0^T x_k(t)dt<\infty$
for each $k=1,\dots,n,$ for problem (\ref{ExtendedZ1}) it is
sufficient to consider the case of $n=1.$ Indeed, one can
legitimately rewrite the function
 {(\ref{eq:laverage})} as
\begin{eqnarray*}
\bar{L} (x_1,\dots,x_n) = \sum_{k=1}^n \liminf_{T \to
\infty}\frac{1}{T}\int_0^T
\left(\frac{x^{1-\alpha}_k(t)}{1-\alpha} - \lambda^k x_k(t)
\right) dt,
\end{eqnarray*} where $
\lambda^k=\sum_{l \in P(k)} \lambda_l,$ which allows us to
decouple different sources. Thus, we will focus on the case of
$n=1$, and solve the following optimal control problem
\begin{equation}\label{eq:laverage1}
\tilde{J}(x_0) = \liminf_{T \to \infty} \frac{1}{T} \int_0^T
\left(\frac{x^{1-\alpha}(t)}{1-\alpha}-\lambda x(t)\right)
dt\rightarrow \max_{T_1,T_2,\dots},
\end{equation}
where $x(t)$ is subject to (\ref{Kostia1}), (\ref{Kostia2}) and
the impulsive controls $T_1,T_2,\dots$ with the initial condition
$x(0)=x_0.$ Here and below the index $k=1$ has been omitted for
convenience.

In the remaining part of this subsection, using the verification
theorem (see Theorem \ref{ZYt2}), we rigorously obtain the optimal
policy and value to problem (\ref{eq:laverage1}) in closed-forms.

Let us start with formulating the congestion control problem
(\ref{eq:laverage1}) in the framework given in the previous
section, which also applies to the next subsection. Indeed, one
can take the following system parameters; $X=(0,\infty),$
$j(x,v)=b^v x$, $C(x,v)=0$ with $v\in V=\{1,2,\dots\},$ $f(x,u)=a
x^\gamma,$ and $c(x,u)=\frac{x^{1-\alpha}}{1-\alpha}-\lambda x$
with $u\in U,$ which is a singleton, i.e., there is no gradual
control, so that in what follows, we omit $u\in U$ everywhere. For
practical reasons, it is reasonable to focus only on policies
$\pi$, under which there is some constant $T^\pi\ge 0$ such that
for each $t>T^\pi,$ $x^\pi(t)$ belongs to a $\pi$-dependent but
$t$-independent compact subset of $X.$ \footnote{This requirement
can be withdrawn in the next subsection dealing with the
discounted problem.}

\begin{theorem} \label{thm:thres}
Suppose $\lambda> 0,$
 {$\gamma\in[0,1],~\alpha>0$, $\alpha\ne 1,$
$2-\alpha-\gamma\ne 0$, $a\in(0,\infty),$ and $b\in (0,1).$} Let
us consider the average congestion control problem
$(\ref{eq:laverage1}).$ Then the optimal policy is given by
$\pi^\ast=({\cal L^\ast},v^{{{\cal L^\ast}}})$ with ${\cal
L^\ast}=[\overline{x},\infty)$, and $v^{{\cal L^\ast}}(x)=k$ if
$x\in [\frac{\overline{x}}{b^{k-1}},
\frac{\overline{x}}{b^k})\subseteq {\cal L^\ast}$ for
$k=1,2,\dots,$ where
\begin{equation}\label{eq:threshold}
\bar{x} =  {\left\{\frac{(2-\gamma)(1-b^{2-\alpha-\gamma})}
{(2-\alpha-\gamma)(1-b^{2-\gamma})\lambda}\right\}}^\frac{1}{\alpha}>0.
\end{equation}
When $\gamma<1,$ the value function is given by
\begin{equation}\label{eq:g}
J({x}_0,\pi^\ast)=g:=\overline{x}\lambda\frac{\alpha}{1-\alpha}\frac{(1-\gamma)(1-b^{2-\gamma})}{(2-\gamma)(1-b^{1-\gamma})};
\end{equation}
and when  $\gamma=1$,
\begin{equation}\label{eq:g1}
J({x}_0,\pi^\ast)=g:=\overline{x}\lambda\frac{\alpha}{1-\alpha}\frac{b-1}{\ln{(b)}}.
\end{equation}
\end{theorem}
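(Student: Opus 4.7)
The plan is to invoke Theorem~\ref{ZYt2} by constructing a canonical triplet $(g,\pi^{\ast},h)$ that satisfies Condition~\ref{ZYCA}. In the framework of Section~\ref{opt} I take $X=(0,\infty)$, $f(x)=ax^{\gamma}$ (there is no gradual control, so $U$ is a singleton), $j(x,v)=b^{v}x$ and $C(x,v)=0$ for $v\in V=\{1,2,\ldots\}$, and $c(x)=\frac{x^{1-\alpha}}{1-\alpha}-\lambda x$. I would then guess the threshold feedback $\pi^{\ast}=(\mathcal{L}^{\ast},v^{f,\mathcal{L}^{\ast}})$ with $\mathcal{L}^{\ast}=[\bar{x},\infty)$ and $v^{f,\mathcal{L}^{\ast}}(x)=k$ whenever $x\in[\bar{x}/b^{k-1},\bar{x}/b^{k})$, so that every point of $\mathcal{L}^{\ast}$ is mapped to $[b\bar{x},\bar{x})\subset X\setminus\mathcal{L}^{\ast}$ in a single impulse, and the set $\mathcal{D}=\{\bar{x}/b^{k-1}:k\ge 1\}$ on which $h'$ will fail to exist is countable, hence Lebesgue-null.

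Next I would build $h$ and simultaneously determine the two unknowns $g$ and $\bar{x}$. On $(0,\bar{x})$ I define $h$ as an antiderivative of $\frac{g-c(y)}{ay^{\gamma}}$, which makes the gradual term in (\ref{ZYBellmanAverage}) vanish by construction; on $[\bar{x},\infty)$ I extend $h$ by $h(x):=h(b^{k}x)$ with $k$ chosen so that $b^{k}x\in[b\bar{x},\bar{x})$, which makes the impulsive term vanish along $v^{f,\mathcal{L}^{\ast}}$ and is automatically continuous on $(0,\infty)$ provided the single identity $h(\bar{x}^{-})=h(b\bar{x})$ holds. Writing this identity as $\int_{b\bar{x}}^{\bar{x}}\frac{g-c(y)}{ay^{\gamma}}\,dy=0$ and carrying out three elementary integrals (the hypotheses $2-\alpha-\gamma\neq 0$ and $\gamma<1$ prevent degenerate cases) produces a linear equation in $g$ and hence expresses $g$ as an explicit function of $\bar{x}$. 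To pin $\bar{x}$ down I would impose the smooth-pasting condition $h'(\bar{x}^{-})=b\,h'(b\bar{x})$, equivalently the first-order condition $\partial g/\partial\bar{x}=0$ applied to the implicit formula just obtained; a direct computation shows that its unique positive solution is (\ref{eq:threshold}) and that substituting back gives (\ref{eq:g}).

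It remains to verify the Bellman inequalities not true by construction, together with Condition~\ref{ZYCA}(iv). On $(0,\bar{x})$ I must show $h(b^{v}x)\le h(x)$ for every $v\ge 1$; it suffices to prove $h(bx)\le h(x)$, which I would do by writing $h(x)-h(bx)=\int_{bx}^{x}\frac{g-c(y)}{ay^{\gamma}}\,dy$ and using the sign structure of $g-c(y)$ on $(0,\bar{x})$ forced by the choice of $\bar{x}$. On $[\bar{x},\infty)\setminus\mathcal{D}$ the impulsive term is zero by construction, and the remaining requirement $c(x)-g+h'(x)ax^{\gamma}\le 0$ reduces, via the piecewise identity $h'(x)=bh'(bx)$ valid on this region, to a pointwise inequality on $[b\bar{x},\bar{x})$ that follows from the explicit form of $c$ combined with the smooth-pasting equality at $x=\bar{x}$. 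Condition~\ref{ZYCA}(iv) follows from the standing restriction that $x^{\pi}(t)$ eventually lies in a $\pi$-dependent compact subset of $(0,\infty)$, on which $h$ is bounded by continuity, so that $h(x^{\pi}(T))/T\to 0$ for every admissible $\pi$. The case $\gamma=1$ is handled identically, except that the antiderivative $\int y^{-1}dy=\ln y$ replaces the power rule and yields (\ref{eq:g1}). The main obstacle is the pair of Bellman inequalities just described, since they are global statements about $h$ whose validity depends delicately on the smooth-pasting choice of $\bar{x}$ and on the sign of $g-c(y)$ across the continuation region.
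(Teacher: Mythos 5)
Your strategy is the paper's own: verify Condition~\ref{ZYCA} for a triplet $(g,\pi^{*},h)$ in which $h$ equals, on $(0,\bar x)$, the antiderivative $h_{0}$ of $(g-c(y))/(ay^{\gamma})$ (this is exactly the paper's $h_{0}$) and is extended by $h(x)=h_{0}(b^{k}x)$ on $[\bar x/b^{k-1},\bar x/b^{k})$, with $(g,\bar x)$ determined by continuity plus smooth pasting at $\bar x$ --- the same two equations the paper writes as ``$\tilde\Delta_{1}$ is stationary at $\bar x$ and vanishes there.'' Your reduction of the impulse inequality on $(0,\bar x)$ to the single case $v=1$ by iterating $h(by)\le h(y)$ (legitimate since $(0,\bar x)$ is invariant under $y\mapsto by$) is in fact cleaner than the paper's treatment, which reruns the concavity argument with $b$ replaced by $b^{m}$ and invokes $\partial\bar x/\partial b\le 0$; but note that even for $v=1$ the inequality is not a bare ``sign of $g-c$'' observation --- one still needs the concavity of $\tilde\Delta_{1}$ together with $\tilde\Delta_{1}(\bar x)=0$ and $\lim_{x\downarrow 0}\tilde\Delta_{1}(x)\le 0$.

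The genuine gap is in your final step. On ${\cal L}^{*}$ the drift inequality $c(x)-g+ax^{\gamma}h'(x)\le 0$ does \emph{not} reduce to ``a pointwise inequality on $[b\bar x,\bar x)$'': substituting $h'(x)=b^{k}h_{0}'(b^{k}x)$ for $x\in[\bar x/b^{k-1},\bar x/b^{k})$ yields the $k$-indexed family
\begin{equation*}
c(yb^{-k})-g+b^{k(1-\gamma)}\bigl(g-c(y)\bigr)\le 0,\qquad y\in[b\bar x,\bar x),\ k=1,2,\dots,
\end{equation*}
and the $k$-dependence is exactly where the paper spends the entire second half of its proof, running an induction over the intervals $[\bar x/b^{k-1},\bar x/b^{k})$ and needing the auxiliary estimate $\sum_{m=0}^{M}b^{m(2-\alpha-\gamma)}/\sum_{m=0}^{M}b^{m(2-\gamma)}\le b^{-M\alpha}$ to locate the maximizer of the $k$-th member. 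A clean one-step version of your idea does exist: with $\Delta_{2}(x):=c(x)-g+ax^{\gamma}h'(x)$, the identity $h'(x)=bh'(bx)$ on $[\bar x,\infty)$ gives the recursion $\Delta_{2}(x)=b^{1-\gamma}\Delta_{2}(bx)+G(x)$, where $G(x)=\frac{(1-b^{2-\alpha-\gamma})x^{1-\alpha}}{1-\alpha}-\lambda(1-b^{2-\gamma})x-g(1-b^{1-\gamma})$ is the paper's function in (\ref{ZZZZZZ}); induction over $k$ then works provided $G\le 0$ on \emph{all} of $[\bar x,\infty)$ --- a single inequality, but on the whole ray, not on $[b\bar x,\bar x)$, and it still requires the case split on the sign of $2-\alpha-\gamma$ and the concavity/stationary-point analysis that the paper performs. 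As written, your plan skips this, which is the bulk of the actual proof.
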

\begin{proof} Suppose $\gamma<1$.
By Theorem \ref{ZYt2}, it suffices to show that Condition \ref{ZYCA} is
satisfied by the policy $\pi^\ast=({\cal L^\ast},v^{^{{\cal
L^\ast}}})$, the constant $g$ given by (\ref{eq:g}) and the
function
\begin{equation}\label{eq:h}
h(x)=\left\{\begin{array}{ll}
h_0(x), & \mbox{if} \ x \in (0,\bar{x}),\\
h_k(x)=h_0(b^k x), & \mbox{if} \ x \in
[\bar{x}/b^{k-1},\bar{x}/b^k),
\end{array}\right.
\end{equation}
where $ h_0(x) = \frac{1}{a} \left[
-\frac{x^{2-\alpha-\gamma}}{(1-\alpha)(2-\alpha-\gamma)} +\lambda
\frac{x^{2-\gamma}}{2-\gamma}+g \frac{x^{1-\gamma}}{1-\gamma}
\right], $ and $\bar{x}$ is given by (\ref{eq:threshold}). For
reference and to improve the readability, we write down the
Bellman equation (\ref{ZYBellmanAverage}) for problem
$(\ref{eq:laverage1})$ as follows;
\begin{eqnarray}\label{eq:dpaverageTCPpower}
&&\max\left\{\left(\frac{x^{1-\alpha}}{1-\alpha}-\lambda
x\right)-g + \frac{\partial h}{\partial
x}(x)ax^\gamma,~\sup_{m=1,2,\dots}\{h(b^m x)-h(x)\}\right\}=0.
\end{eqnarray}
Since parts (i,iv) of Condition \ref{ZYCA} are trivially
satisfied, we only verify its parts (ii,iii) as follows.

Consider firstly $x\in(0,\overline{x})=X\setminus {\cal L^\ast}.$
Then, we obtain from direct calculations that
$\left(\frac{x^{1-\alpha}}{1-\alpha}-\lambda x\right)-g +
\frac{\partial h(x)}{\partial x} ax^\gamma=
\left(\frac{x^{1-\alpha}}{1-\alpha}-\lambda x\right)-g +
\frac{\partial h_0(x)}{\partial x} ax^\gamma = 0.$ Let us show
that $\sup_{m=1,2,\dots}\{h(b^m
x)-h(x)\}=\sup_{m=1,2,\dots}\{h_0(b^m x)-h_0(x)\}\le 0$ for
$x\in(0,\overline{x})$ as follows. Define $ \Delta_1(x) := h_0(bx)
- h_0(x) $ for each $x\in (0,\overline{x}).$ Then one can show
that $\Delta_1(x)<0$ for each $b\in(0,1).$ Indeed, direct
calculations give
\begin{eqnarray*}
\Delta_1(x)=\frac{x^{1-\gamma}}{a} \left[
-\frac{(b^{2-\alpha-\gamma}-1)x^{1-\alpha}}{(1-\alpha)(2-\alpha-\gamma)}
+\lambda \frac{(b^{2-\gamma}-1)x}{2-\gamma} +g
\frac{(b^{1-\gamma}-1)}{1-\gamma}\right],
\end{eqnarray*}
so that for the strict negativity of $\Delta_1(x)$, it is
equivalent to showing it for the following expression
\begin{eqnarray*}
\tilde \Delta_1(x) :=
-\frac{(b^{2-\alpha-\gamma}-1)x^{1-\alpha}}{(1-\alpha)(2-\alpha-\gamma)}
+\lambda \frac{(b^{2-\gamma}-1)x}{2-\gamma} +g
\frac{(b^{1-\gamma}-1)}{1-\gamma},
\end{eqnarray*}
whose first order and second order derivatives (with respect to
$x$) are given by
\begin{eqnarray*}
\tilde \Delta_1'(x) =
-\frac{(b^{2-\alpha-\gamma}-1)x^{-\alpha}}{2-\alpha-\gamma}
+\lambda \frac{(b^{2-\gamma}-1)}{2-\gamma}
\end{eqnarray*}
and
\begin{eqnarray*}
\tilde \Delta_1''(x) =
\frac{\alpha(b^{2-\alpha-\gamma}-1)x^{-\alpha-1}}{2-\alpha-\gamma}.
\end{eqnarray*}
Under the conditions of the parameters, $\tilde
\Delta_1''(x)<0$ for each $x\in(0,\overline{x}),$ and thus the
function $\tilde \Delta_1(x) $ is concave on $(0,\overline{x})$
achieving its unique maximum at the stationary point given by $
 x=\overline{x}=  \left\{{\frac{(2-\gamma)(1-b^{2-\alpha-\gamma})}
{(2-\alpha-\gamma)(1-b^{2-\gamma})\lambda}}\right\}^{\frac{1}{\alpha}}>0.
$ Note that $\tilde{\Delta}_1(\overline{x})=0$ and
$\lim_{x\downarrow 0}\tilde{\Delta}_1(x) \le 0.$ It follows from
the above observations and the standard analysis of derivatives
that $\tilde \Delta_1(x)<0$ and thus $\Delta_1(x)<0$ for each
$x\in(0,\overline{x}).$ Since $
\frac{\partial\overline{x}}{\partial b}\le 0 $ for each
$b\in(0,1)$ as can be easily verified, one can replace $b$ with
$b^m$ ($m=2,3,\dots$) in the above argument to obtain that
$h_0(b^mx) - h_0(x)<0$ for each $x\in(0,\overline{x})$, and thus
\begin{eqnarray}\label{ZYUseful1}
\sup_{m=1,2,\dots}\{h_0(b^m x)-h_0(x)\}\le 0
\end{eqnarray}
for $x\in(0,\overline{x}),$ as desired. Hence, it follows that
Condition \ref{ZYCA}(ii,iii) is satisfied on $(0,\overline{x}).$

Next, we show by induction that Condition \ref{ZYCA}(ii,iii) is
satisfied on
$[\frac{\overline{x}}{b^{k-1}},\frac{\overline{x}}{b^k})$,
$k=1,2,\dots.$ Let us consider the case of $k=1$, i.e., the
interval $[\bar{x},\frac{\bar{x}}{b})$. By the definition of the
function $h(x)$, we have
\begin{eqnarray}\label{ZYV0}
\sup_{m=1,2,\dots}\{h(b^mx)-h(x)\}= 0
\end{eqnarray}
for $x \in [\bar{x},\bar{x}/b)$. Indeed, by the definition of
$h(x)$, we have
\begin{eqnarray}\label{ZYN}
h(bx)-h(x)=0
\end{eqnarray}
for $x \in [\bar{x},\bar{x}/b)$, whereas for each $m=2,3,\dots$
and $x\in [\bar{x},\frac{\bar{x}}{b}),$ it holds that
$h(b^mx)-h(x)=h_0(b^m x)-h_0(bx)\le 0$, which follows from that
$bx\in(0,\overline{x})$, $b^mx=b^{m-1}(bx)\in(0,\overline{x})$ for
each $x\in [\bar{x},\frac{\bar{x}}{b}),$ and (\ref{ZYUseful1}).
Furthermore, one can show that
\begin{eqnarray}\label{ZYV1}
\Delta_2(x) &:=& \frac{x^{1-\alpha}}{1-\alpha}-\lambda x - g +
\frac{\partial h(x)}{\partial x} a x^\gamma\le0
\end{eqnarray}
for each $x \in [\bar{x},\bar{x}/b),$ which follows from the
following observations. Since $h(x)=h_1(x)=h_0(bx),$ we see $
\Delta_2(x)=-\frac{(b^{2-\alpha-\gamma}-1)x^{1-\alpha}}{1-\alpha}
+\lambda (b^{2-\gamma}-1) x + g (b^{1-\gamma}-1)$ for each $x\in
[\bar{x},\frac{\bar{x}}{b}),$ and in particular,
\begin{eqnarray}\label{ZYSaturday}
\Delta_2(\overline{x})=0,
\end{eqnarray}
as can be easily verified. The derivative of the function
$\Delta_2(x)$ with respect to $x$ is given by $
\Delta_2'(x)=-(b^{2-\alpha-\gamma}-1)x^{-\alpha}+\lambda
(b^{2-\gamma}-1). $ If $2-\gamma-\alpha<0,$ then $\Delta_2'(x)<0$,
which together with (\ref{ZYSaturday}) shows $\Delta_2(x)\le 0$ on
$[\bar{x},\frac{\bar{x}}{b}).$ If $2-\gamma-\alpha>0$, then
$\Delta_2''(x)= \alpha(b^{2-\alpha-\gamma}-1)x^{-\alpha-1}<0$ and
thus, the function $\Delta_2(x)$ is concave with the maximum
attained at the stationary point
$x=\left(\frac{1-b^{2-\alpha-\gamma}}{(1-b^{2-\gamma})\lambda}\right)^{\frac{1}{\alpha}}.$
Since
$\left(\frac{1-b^{2-\alpha-\gamma}}{(1-b^{2-\gamma})\lambda}\right)^{\frac{1}{\alpha}}\le
\overline{x}$, (\ref{ZYSaturday}) implies $\Delta_2(x)\le 0$ on
$[\bar{x},\frac{\bar{x}}{b}),$ as desired. By the way, for the
later reference, the above observations actually show that
\begin{eqnarray}\label{ZZZZZZ}
G(x)&:=&-\frac{(b^{2-\alpha-\gamma}-1)x^{1-\alpha}}{1-\alpha}
+\lambda (b^{2-\gamma}-1) x + g (b^{1-\gamma}-1)\le 0
\end{eqnarray}
for all $x \ge \overline{x}.$ Thus, combining (\ref{ZYV0}), (\ref{ZYN}), 
and (\ref{ZYV1}) shows that Condition \ref{ZYCA}(ii,iii) is
satisfied on $[\bar{x},\frac{\bar{x}}{b}).$

Assume that for each
$x\in[\frac{\overline{x}}{b^{k-1}},\frac{\overline{x}}{b^k})$ and
each $k=1,2,\dots,M,$ relations (\ref{ZYV0}) and (\ref{ZYV1})
hold, together with
\begin{eqnarray}\label{ZYYY}
h(b^{k}x)-h(x)=0 \mbox{~(the corresponding version of
(\ref{ZYN}))}.
\end{eqnarray}
Now we consider the case of $k=M+1,$ i.e., when $x\in
[\frac{\overline{x}}{b^{M}},\frac{\overline{x}}{b^{M+1}})$. For
each $x\in
[\frac{\overline{x}}{b^{M}},\frac{\overline{x}}{b^{M+1}}),$ when
$m=1,2,\dots,M,$ it holds that $b^m x\in[
\frac{\overline{x}}{b^{M-m}},\frac{\overline{x}}{b^{M+1-m}})$, and
thus $ h(b^m x)-h(x)=h_0(b^{M+1}x)-h_0(b^{M+1}(x))=0; $ when
$m=M+1,M+2,\dots$, $b^m x\in (0,\overline{x})={\cal L^\ast}$, and
thus $ h(b^m x)-h(x)=h_0(b^m x)-h_0(b^{M+1}x)=0$ if $m=M+1$, and $
h(b^m x)-h(x)=h_0(b^{m-(M+1)}(b^{M+1} x))-h_0(b^{M+1}x)\le 0$ if
$m>M+1,$ by (\ref{ZYUseful1}). Thus, we see (\ref{ZYV0}) holds for
$x\in [\frac{\overline{x}}{b^{M}},\frac{\overline{x}}{b^{M+1}})$.
Note that in the above we have also incidentally verified the
validity of (\ref{ZYYY}) for the case of $k=M+1$.

Below we verify (\ref{ZYV1}) for the case of $k=M+1,$ which would
complete the proof by induction. To this end, we first present
some preliminary observations that hold for each $k=1,2,\dots.$
For each $k=1,2,\dots,$ since $h(x)=h_k(x)=h_0(b^kx)$ for each $x
\in [\frac{\overline{x}}{b^{k-1}},\frac{\overline{x}}{b^{k}})$, we
have
\begin{eqnarray*}
\Delta_2(x):=-\frac{(b^{k
(2-\alpha-\gamma)}-1)x^{1-\alpha}}{(1-\alpha)} +\lambda x
{(b^{k(2-\gamma)}-1)}+g {(b^ {k(1-\gamma)}-1)}.
\end{eqnarray*}
For the convenience of later reference, let us introduce the
notation
\begin{eqnarray*}
\tilde{\Delta}_k(x)&:=&-\frac{(b^{k(2-\alpha-\gamma)}-1)x^{1-\alpha}}{1-\alpha}
+\lambda (b^{k(2-\gamma)}-1) x + g (b^{k(1-\gamma)}-1)\\
&=&b^{k(1-\gamma)}(-\frac{b^{k(1-\alpha)}x^{1-\alpha}}{1-\alpha}
+\lambda b^{k} x + g )-(-\frac{x^{1-\alpha}}{1-\alpha} +\lambda  x
+ g )
\end{eqnarray*}
for each $x>0.$ Therefore, for $x \in
[\frac{\overline{x}}{b^{k-2}},\frac{\overline{x}}{b^{k-1}})$, we
have
\begin{eqnarray*}
\Delta_2(x)=\tilde{\Delta}_{k-1}(x)=b^{(k-1)(1-\gamma)}(-\frac{b^{(k-1)(1-\alpha)}x^{1-\alpha}}{1-\alpha}
+\lambda b^{k-1} x + g )-(-\frac{x^{1-\alpha}}{1-\alpha} +\lambda
x + g).
\end{eqnarray*}
Let us define
\begin{eqnarray*}
F(x):=-\frac{x^{1-\alpha}}{1-\alpha}+\lambda x + g
\end{eqnarray*}
for each $x>0.$  We then have from the direct calculations that
\begin{eqnarray}\label{ZZZZZZ1}
\tilde{\Delta}_{k-1}(\frac{\bar{x}}{b^{k-2}})&=&b^{(k-1)(1-\gamma)}
F(b\bar{x})-F(\frac{\bar{x}}{b^{k-2}})
\end{eqnarray}
for each $k=1,2,\dots.$ Focusing on $ F(\frac{\bar{x}}{b^{k-2}}),$
we have
\begin{eqnarray*}
b^{1-\gamma}F(\frac{\bar{x}}{b^{k-2}})&=&
-\frac{\bar{x}^{1-\alpha}}{1-\alpha}
\frac{b^{1-\gamma}}{b^{(k-2)(1-\alpha)}}+\lambda
\bar{x}\frac{b^{1-\gamma}}{b^{k-2}} +  g b^{1-\gamma}\\
& =&  -\frac{\bar{x}^{1-\alpha}}{1-\alpha}
\frac{b^{2-\alpha-\gamma}}{b^{(k-1)(1-\alpha)}}+\lambda
\bar{x}\frac{b^{2-\gamma}}{b^{k-1}} + g b^{1-\gamma}\\
& =&
-\frac{(\frac{\bar{x}}{b^{(k-1)}})^{1-\alpha}}{1-\alpha}b^{2-\alpha-\gamma}
+\lambda(\frac{ \bar{x}}{b^{k-1}}) b^{2-\gamma}+ g
  b^{1-\gamma}.
\end{eqnarray*}
Recall that in the above, we have proved that $G(x)\leq 0$ for
$x\geq\bar{x}$, see (\ref{ZZZZZZ}). Thus, we have $ G(\frac{
\bar{x}}{b^{k-1}})\leq0, $ i.e., $
-\frac{b^{2-\alpha-\gamma}(\frac{
\bar{x}}{b^{k-1}})^{1-\alpha}}{1-\alpha}+\lambda b^{2-\gamma}
(\frac{ \bar{x}}{b^{k-1}}) + g b^{1-\gamma}\leq-\frac{(\frac{
\bar{x}}{b^{k-1}})^{1-\alpha}}{1-\alpha}+\lambda (\frac{
\bar{x}}{b^{k-1}}) + g.$ Consequently,
\begin{eqnarray*}
b^{1-\gamma}F(\frac{\bar{x}}{b^{k-2}})\leq  -\frac{(\frac{
\bar{x}}{b^{k-1}})^{1-\alpha}}{1-\alpha}+\lambda (\frac{
\bar{x}}{b^{k-1}}) + g=F(\frac{\bar{x}}{b^{k-1}}).
\end{eqnarray*}

Now we verify (\ref{ZYV1}) for the particular case of $k=M+1$. By
the inductive supposition, (\ref{ZYV1}) holds for
$x\in[\frac{\overline{x}}{b^{M-1}},\frac{\overline{x}}{b^M})$, we
thus have $\Delta_2(\frac{\bar{x}}{b^{M-1}})\leq0$, and
\begin{eqnarray*}
0&\geq&\tilde{\Delta}_{M}(\frac{\bar{x}}{b^{M-1}})=b^{M(1-\gamma)}
F(b\bar{x})-F(\frac{\bar{x}}{b^{M-1}})\geq b^{(M)(1-\gamma)}
F(b\bar{x})-\frac{1}{b^{1-\gamma}}F(\frac{\bar{x}}{b^{M}}).
\end{eqnarray*}
Therefore, we obtain that  $b^{(M+1)(1-\gamma)}
F(b\bar{x})-F(\frac{\bar{x}}{b^{M}})\leq0,$ and by
(\ref{ZZZZZZ1}),
\begin{eqnarray}\label{ZZZstar}
\Delta_2(\frac{\bar{x}}{b^{M}})\leq 0.
\end{eqnarray}
Furthermore, the derivative of the function
$\tilde{\Delta}_{M+1}(x)$ with respect to $x$ is given by $
\tilde{\Delta}_{M+1}'(x)=-(b^{(M+1)(2-\alpha-\gamma)}-1)x^{-\alpha}+\lambda
(b^{(M+1)(2-\gamma)}-1). $ If $2-\gamma-\alpha<0,$ then
$\tilde{\Delta}_{M+1}'(x)<0$. Thus, by (\ref{ZZZstar}), we obtain
that $ \Delta_2(x)=\tilde{\Delta}_{M+1}(x) \leq0 $ for $x \in
[\frac{\overline{x}}{b^{M}},\frac{\overline{x}}{b^{M+1}})$. If
$2-\gamma-\alpha>0$, then $\tilde{\Delta}_{M+1}''(x)=
\alpha(b^{(M+1)(2-\alpha-\gamma)}-1)x^{-\alpha-1}<0$, and in turn,
the function $\tilde{\Delta}_{M+1}(x)$ is concave with the maximum
attained at the stationary point $
x=\left(\frac{1-b^{(M+1)(2-\alpha-\gamma)}}{(1-b^{(M+1)(2-\gamma)})\lambda}\right)^{\frac{1}{\alpha}}.
$ Moreover, we have $
\frac{\sum_{m=0}^{M}b^{m(2-\alpha-\gamma)}}{\sum_{m=0}^{M}b^{m(2-\gamma)}}\leq\frac{1}{b^{M\alpha}},$
which follows from the fact that for each $m=0,1,\dots,M,$
$m(2-\alpha-\gamma)+M\alpha\geq m(2-\gamma),$  so that
$b^{m(2-\alpha-\gamma)}b^{M\alpha}\leq b^{m(2-\gamma)}$. From this
we see
\begin{eqnarray*}
&&\frac{(1-b^{2-\alpha-\gamma})\sum_{m=0}^{M}b^{m(2-\alpha-\gamma)}}{(1-b^{2-\gamma})\sum_{m=0}^{M}b^{m(2-\gamma)}}\leq\frac{1}{b^{M\alpha}}\frac{1-b^{2-\alpha-\gamma}}{1-b^{2-\gamma}}\\
&\Rightarrow&\frac{1-b^{(M+1)(2-\alpha-\gamma)}}{(1-b^{(M+1)(2-\gamma)})\lambda}\leq\frac{1}{b^{M\alpha}}\frac{(2-\gamma)(1-b^{2-\alpha-\gamma})}{(2-\alpha-\gamma)(1-b^{2-\gamma})\lambda}\\
&\Leftrightarrow&
\left(\frac{1-b^{(M+1)(2-\alpha-\gamma)}}{(1-b^{(M+1)(2-\gamma)})\lambda}\right)^{(\frac{1}{\alpha})}\leq
\frac{\bar{x}}{b^{M}}.
\end{eqnarray*}  Finally, it follows from
the last line of the previous inequalities, the concavity of the
function $\tilde{\Delta}_{M+1}$ and (\ref{ZZZstar}) that $
\Delta_2(x)\leq 0 $ for $x \in
[\frac{\overline{x}}{b^{M}},\frac{\overline{x}}{b^{M+1}})$, which
verifies (\ref{ZYV1}), and thus completes the proof.

The case of $\gamma=1$ can be similarly treated.
\end{proof}

\subsection{Solving the discounted optimal impulsive control problem
for the Internet congestion control}
The discounted problem turns out more difficult to deal with, and
we suppose the sending rate increases additively, i.e., $\frac{d
x_k(t)}{dt}=a_k>0$, and decreases multiplicatively, i.e.,
$j(x,v)=bx$ with $b\in (0,1)$ when a congestion notification is sent, see
(\ref{Kostia1}) and (\ref{Kostia2}). Furthermore, we assume
$\alpha\in (1,2).$

Similarly to the average case, upon rewriting the objective
function in problem (\ref{DiscountedZ2}) as
$\bar{L}_\rho(x_1,\dots,x_n) = \sum_{k=1}^n \liminf_{T \to \infty}
\int_0^T e^{-\rho t} \left(\frac{x^{1-\alpha}_k(t)}{1-\alpha} -
\lambda^k x_k(t) \right) dt,$  where $ \lambda^k=\sum_{l \in P(k)}
\lambda_l,$ it becomes clear that there is no loss of generality
to focus on the case of $n=1$;
\begin{equation}\label{eq:ldiscounte1}
\tilde{J}_\rho(x_0) = \liminf_{T \to \infty} \int_0^T e^{-\rho t}
\left(\frac{x^{1-\alpha}(t)}{1-\alpha}-\lambda x(t)\right)
dt\rightarrow \max_{T_1,T_2,\dots},
\end{equation}

Now the Bellman equation (\ref{e5}) has the form
  \begin{equation}\label{e6}
\max\left\{\frac{x^{1-\alpha}}{1-\alpha}-\lambda x-\rho
W(x)+a\frac{dW}{dx},~~\sup_{i\ge 1}[W(b^ix)-W(x)]\right\}=0.
  \end{equation}
The linear differential equation
  \begin{equation}\label{e6p}
\frac{x^{1-\alpha}}{1-\alpha}-\lambda x-\rho\tilde
W(x)+a\frac{d\tilde W}{dx}=0
  \end{equation}
can be integrated:
  \begin{equation}\label{e7}
\tilde
W(x)=e^{\frac{\rho}{a}(x-1)}\biggl(\frac{\lambda}{\rho}+\frac{\lambda
a}{\rho^{2}}+\frac{1}{\rho(\alpha-1)}+\tilde
w_{1}\biggr)-\frac{x^{1-\alpha}}{\rho(\alpha-1)}-\frac{\lambda}{\rho}x-\frac{a\lambda}{\rho^{2}}-\frac{e^{\frac{\rho}{a}x}}{\rho}\int_{1}^{x}e^{-\frac{\rho}{a}u}u^{-\alpha}du.
\end{equation}
Here $\tilde w_1=\tilde W(1)$ is a fixed parameter.

Suppose for a moment that no impulses are allowed, so that
$x(t)=x_0+at$. We omit the $\pi$ index because here is a single
control policy. We have a family of functions $\tilde W(x)$
depending on the initial value $w_1$, but only one of them
represents the criterion
  $$\liminf_{T\to\infty}\int_0^T\left\{ e^{-\rho t}\frac{(x(t))^{1-\alpha}}{1-\alpha}-\lambda x(t)\right\} dt=W^*(x_0).$$
In this situation, for the function $\tilde W$, all the parts of
Condition \ref{c1} are obviously satisfied (${\cal  D}=\emptyset,
T_1=\infty, {\cal L}^*=\emptyset$) except for (iv).

Since $W^*<0$, the case $\limsup_{T\to\infty} e^{-\rho T}
W^*(x(T))>0$ is excluded and we need to find such an initial value
$w_1^*$ that
  \begin{equation} \label{e8}
\lim_{T\to\infty} e^{-\rho T} \tilde W(x(T))=0,~~~\mbox{ where
}~~~x(T)=x_0+aT,~~x_0>0.
  \end{equation}
Equation (\ref{e8}) is equivalent to the following:
  $$\lim_{T\to\infty} e^{\frac{\rho}{a}x_0}\left\{ e^{-\frac{\rho}{a}}\left(\frac{\lambda}{\rho}+\frac{\lambda a}{\rho^2}+\frac{1}{\rho(\alpha-1)}+w_1\right)-\frac{1}{\rho}\int_1^{x_0+aT} e^{-\frac{\rho}{a} u} u^{-\alpha} du\right\}=0.$$
Therefore,
  \begin{equation}\label{e9}
w_1^*=\frac{e^{\frac{\rho}{a}}}{\rho}\left(\frac{\rho}{a}\right)^{\alpha-1}\Gamma\left(1-\alpha,\frac{\rho}{a}\right)-\frac{1}{\rho(\alpha-1)}-\frac{\lambda(\rho+a)}{\rho^2},
  \end{equation}
and $W^*(x_0)$ is given by (\ref{e7}) at $w_1=w_1^*$. Here
$\Gamma(y,z)=\int_z^\infty e^{-u} u^{y-1} du$ is the incomplete
gamma function \cite[3.381-3]{b1}. By the way, $W^*$ is the
maximal non-negative solution to the differential equation
(\ref{e6p}).

For the discounted impulsive control problem (\ref{DiscountedZ2}),
the solution is given in the following statement.

\begin{theorem}\label{t2} \par\noindent(a) Equation
  \begin{eqnarray}
H(x)&:=&\left(e^{\frac{\rho x(1-b)}{a}}-1\right)\frac{(1-b)\lambda a}{\rho}-(1-b)e^{\frac{\rho x}{a}}\int_{bx}^x e^{-\frac{\rho u}{a}} u^{-\alpha} du \label{e10}\\
&& -\left(e^{\frac{\rho
x(1-b)}{a}}-b\right)\left[\frac{x^{1-\alpha}(1-
b^{1-\alpha})}{\alpha-1}+\lambda x(1-b)\right] \nonumber =0
\nonumber
  \end{eqnarray}
has a single positive solution $\bar x$.

\par\noindent(b) Let
  \begin{eqnarray}
w_1&=& \frac{e^{\frac{\rho}{a}}}{\rho}\int_1^{\bar x} e^{-\frac{\rho u}{a}} u^{-\alpha} du -\frac{\lambda(\rho+a)}{\rho^2}-\frac{1}{\rho(\alpha-1)}\label{e11p}\\
&&-\left.\left[\frac{\bar x^{1-\alpha}(1-
b^{1-\alpha})}{\rho(\alpha-1)}+\frac{(1-b)\lambda \bar
x}{\rho}+\frac{e^{\frac{\rho b \bar x}{a}}}{\rho}\int_{b\bar
x}^{\bar x} e^{-\frac{\rho}{a}u} u^{-\alpha} du
\right]\right/\left( e^{\frac{\rho b \bar
x-\rho}{a}}-e^{\frac{\rho \bar x-\rho}{a}}\right)\nonumber
  \end{eqnarray}
and, for $0<x<\bar x$, put $W(x)=\tilde W(x)$, where $\tilde W$ is
given by formula (\ref{e7}) under $\tilde w_1=w_1$. For the
intervals $\left[\bar x, \frac{\bar x}{b}\right)$,
$\left[\frac{\bar x}{b},\frac{\bar x}{b^2}\right)$, $\ldots$ the
function $W$ is defined recursively: $W(x):= W(bx)$. Then the
function $W$ satisfies items (i, ii, iii) of Condition \ref{c1}.

\par\noindent(c) The function $W(x_0)=\sup_\pi {J}_\rho(x_0,\pi)={J}_\rho(x_0,\pi^*)$ is the Bellman
function, where the (feedback) optimal policy $\pi^*$ is given by
  $${\cal L}^*=[\bar x,\infty),~~~v^{f,{\cal L}^*}(x)=i~~~\mbox{ if } x\in\left[\frac{\bar x}{b^{i-1}}, \frac{\bar x}{b^i}\right).$$
\end{theorem}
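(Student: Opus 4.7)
My plan is to apply the verification Theorem \ref{t1} to the candidate $(W,\pi^*)$ provided in parts (b) and (c). The overall structure closely parallels the proof of Theorem \ref{thm:thres}: first one constructs $W$ by integrating the differential part of the Bellman equation on $(0,\bar x)$ and extending self-similarly via the impulse relation, then one checks the Bellman inequality on each side, and finally the asymptotic condition \ref{c1}(iv).

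For part (a), the plan is to analyze the scalar function $H$ defined in (\ref{e10}) directly. Using $\alpha\in(1,2)$ and $b\in(0,1)$, a Taylor expansion as $x\downarrow 0$ shows that the term $-(1-b^{1-\alpha})x^{1-\alpha}/(\alpha-1)$ dominates and gives $H(0^+)<0$, while as $x\to\infty$ the factor $e^{\rho x(1-b)/a}-1$ forces $H(x)\to+\infty$; uniqueness of the positive root $\bar x$ then follows by a monotonicity argument based on differentiating $H$, using $\lambda>0$ to control the linear terms.

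For part (b), formula (\ref{e7}) is the general solution of the linear ODE (\ref{e6p}), and the constant $w_1$ in (\ref{e11p}) is chosen precisely so that $\tilde W(\bar x)=\tilde W(b\bar x)$. Thanks to this choice, the piecewise definition $W(x):=W(bx)$ on $[\bar x/b^{k-1},\bar x/b^k)$ yields a continuous function on $(0,\infty)$ which is differentiable off the countable null set $\mathcal D=\{\bar x/b^k:k\ge 0\}$, verifying \ref{c1}(i). On $(0,\bar x)$ the ODE (\ref{e6p}) makes the first supremum in (\ref{e6}) vanish, so one only needs $\sup_{i\ge 1}[W(b^i x)-W(x)]\le 0$; since $b^i x\in(0,\bar x)$, this reduces to a monotonicity property of $\tilde W$ on $(0,\bar x)$, which can be read off from (\ref{e6p}) once the sign of $\tilde W'$ is analysed. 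On each interval $[\bar x/b^{k-1},\bar x/b^k)$ the self-similarity $W(bx)=W(x)$ makes the second supremum in (\ref{e6}) equal to $0$ and attained at $i=1$, giving \ref{c1}(iii) for the stated $\pi^*$; it remains to show that the differential expression is $\le 0$ on each such interval. The defining equation $H(\bar x)=0$ is exactly the compatibility identity ensuring that this expression vanishes at $x=\bar x$; the plan is to extend the inequality throughout $[\bar x,\bar x/b)$ by a concavity/monotonicity argument (paralleling the treatment of $\tilde\Delta_1$ and $\Delta_2$ in Theorem \ref{thm:thres}), and then iterate to all higher strata via the substitution $x\mapsto x/b$ combined with the analogue of (\ref{ZZZZZZ1}).

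For part (c), once \ref{c1}(i--iii) are in place the Bellman equation (\ref{e6}) is satisfied and only the transversality condition \ref{c1}(iv) remains. Under $\pi^*$ the trajectory is eventually confined to the compact set $[b\bar x,\bar x]$ (any excursion above $\bar x$ is immediately reduced), where $W$ is bounded, so $e^{-\rho T}W(x^{\pi^*}(T))\to 0$. The main obstacle is the reverse inequality $\limsup_{T\to\infty} e^{-\rho T} W(x^\pi(T))\ge 0$ for every admissible $\pi$: this requires a lower estimate on $W$. From (\ref{e7}) the choice of $w_1$ in (\ref{e11p}) is what kills the unbounded exponential $e^{\rho x/a}$ contribution, leaving $\tilde W(x)$ behaving like $-x^{1-\alpha}/(\rho(\alpha-1))-\lambda x/\rho$ up to lower-order terms, and by the self-similar extension $W$ on the upper branches is also at most linear in $x$; combining this with admissibility (finiteness of the discounted integrals forces the trajectory to grow sub-exponentially) yields \ref{c1}(iv). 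An application of Theorem \ref{t1} then delivers $W(x_0)=\sup_\pi J_\rho(x_0,\pi)=J_\rho(x_0,\pi^*)$, completing the proof.
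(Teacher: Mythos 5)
There are two genuine gaps. First, in part (a) your asymptotics for $H$ are wrong in sign at both ends, and your uniqueness strategy would not work even with the signs fixed. With $\alpha\in(1,2)$ and $b\in(0,1)$ one has $1-b^{1-\alpha}<0$, and the singular contribution $-\bigl(e^{\rho x(1-b)/a}-b\bigr)\frac{x^{1-\alpha}(1-b^{1-\alpha})}{\alpha-1}$ is \emph{positive} and cancels against the integral term $-(1-b)e^{\rho x/a}\int_{bx}^x e^{-\rho u/a}u^{-\alpha}du$; the paper shows $\lim_{x\to 0}H(x)=0$ with $\lim_{x\to 0}H'(x)=+\infty$, so $H>0$ near the origin, not $H(0^+)<0$. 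Likewise, as $x\to\infty$ the dominant term is $-\bigl(e^{\rho x(1-b)/a}-b\bigr)\lambda x(1-b)$, so $H(x)\to-\infty$, not $+\infty$. In particular $H$ rises from $0$ and then falls to $-\infty$, so it is \emph{not} monotone, and "differentiating $H$" cannot by itself give uniqueness. The paper proves uniqueness by an entirely different, geometric argument: $H(x)=0$ is shown to be equivalent to the smooth-fit system (\ref{e12}) ($\tilde W(\bar x)=\tilde W(b\bar x)$ together with matching derivatives), and at most one such tangency point can exist because the relevant increasing branch (part III) of $\tilde W(x)$ is convex while the increasing branch (part I) of $\tilde W(bx)$ is concave, the inflection locus being the curve $v(x)$. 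Your part (b) sketch is broadly in the spirit of the paper's verification (ODE on $(0,\bar x)$, self-similar extension, induction over the strata), though you only invoke the value-matching condition for $w_1$ and leave the derivative-matching implicit.

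Second, and more seriously, your plan for part (c) cannot be completed as stated: Condition \ref{c1}(iv) is \emph{false} for this $W$, and the paper says so explicitly. The obstruction is not sub-exponential growth of trajectories for large $x$ (which your argument addresses) but the behaviour near $x=0$: an admissible policy may apply impulses $j(x,v)=b^v x$ so as to return the state arbitrarily close to $0$ on every interval $(T-1,T]$, and since $W$ is unbounded below near the origin this destroys $\limsup_{T\to\infty}e^{-\rho T}W(x^\pi(T))\ge 0$. The paper circumvents this by fixing $x_0$, replacing the reward rate by $\hat c(x)=c(x)$ for $x\ge\hat x:=\min\{x_0,b\bar x\}$ and $\hat c(x)=c(\hat x)$ for $x<\hat x$, noting $\hat c\ge c$, checking that the corresponding $\hat W$ is bounded (so Condition \ref{c1}(iv) holds for the modified problem), applying Theorem \ref{t1} there, and then transferring the conclusion back via $\sup_\pi J_\rho(x_0,\pi)\le\sup_\pi\hat J(x_0,\pi)=\hat W(x_0)=W(x_0)$ and the fact that the $\pi^*$-trajectory from $x_0$ never enters $\{x<\hat x\}$. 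Without this (or some equivalent comparison/truncation device) the verification theorem simply does not apply, so this step is a missing idea rather than a detail to be filled in.
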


Some comments and remarks are in position, before we give the
proof of this theorem. For $b=0.5$, $\rho=1$, $\alpha=1.3$,
$\lambda=2$, $a=0.2$ the graph of function $W$ is presented on
Fig.\ref{figure1}. Here $\bar x=0.7901$ and $w_1=-4.9301$. The
dashed line represents the graph of function
\begin{equation}\label{e9p}
z(x)=-\frac{1}{\rho}\left(\frac{x^{1-\alpha}}{\alpha-1}+\lambda
x\right).
\end{equation}
When $\tilde W(x)=z(x)$, we have $\frac{d\tilde W}{dx}=0$; if
$\tilde W(x)>z(x)$ ($\tilde W(x)<z(x)$) function $\tilde W$
increases (decreases). The dotted line represents the graph of
function
  $$v(x)=\frac{a(x^{-\alpha}-\lambda)}{\rho^2}-\frac{1}{\rho}\left(\frac{x^{1-\alpha}}{\alpha-1}+\lambda x\right).$$
If $\tilde W(x)=v(x)$ then from (\ref{e6p}) we have
\begin{eqnarray*}
a^2\frac{d^2\tilde W}{dx^2}&=&a^2\left[ a\rho\frac{d\tilde W}{dx}+\lambda a-a x^{-\alpha}\right]\\
&=&a^2\left[\rho^2\tilde W(x)+\rho\lambda x+\frac{\rho
x^{1-\alpha}}{\alpha-1}+\lambda a-ax^{-\alpha}\right]=0,
\end{eqnarray*}
that is, $x$ is the point of inflection of function $\tilde W$.
This reasoning applies to any solution of equation (\ref{e6p}).

\begin{figure}[htbp]
\begin{center}
\includegraphics[width=12cm]{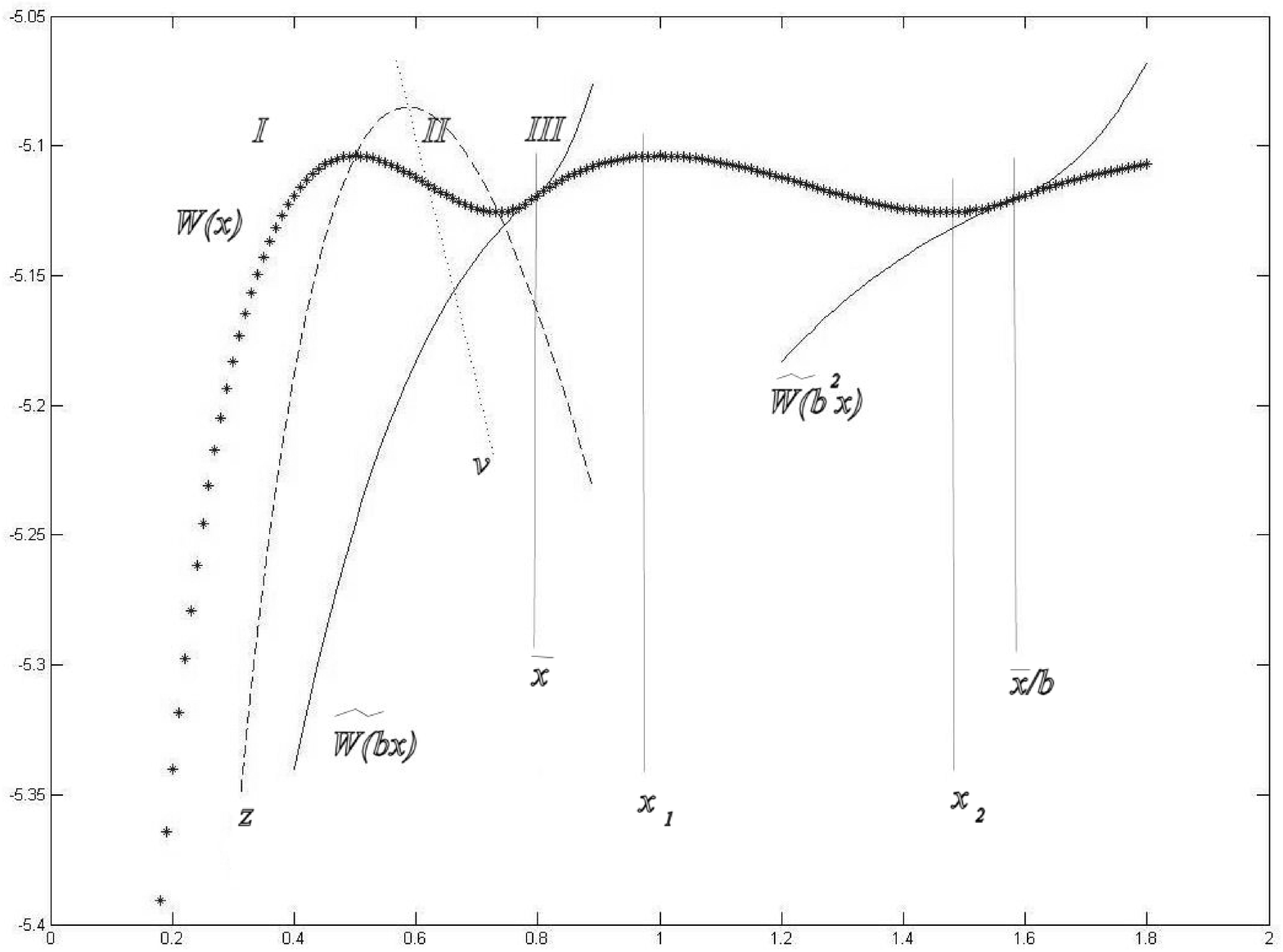}
\caption{Graph of the Bellman function $W(x)$ (bright line with
the star point markers).}\label{figure1}
\end{center}
\end{figure}

On the graph, for $0<x<\bar x$, the Bellman function $W(x)=\tilde
W(x)$ has three parts, denoted below as I,II and III, where it
increases, strictly decreases, and again increases.
Correspondingly, function $\tilde W(bx)$ also has three parts I,II
and III where it increases, strictly decreases and increases
again, and $W(x)=\tilde W(bx)$ for $\bar x\le x<\frac{\bar x}{b}$.
Point $\bar x$ is such that
  \begin{equation}\label{e12}
  \tilde W(\bar x)=\tilde W(b\bar x) ~~\mbox{ and }~~\left.\frac{d\tilde W(x)}{dx}\right|_{\bar x}=\left.\frac{d\tilde W(bx)}{dx}\right|_{\bar x}.
  \end{equation}
As is shown in the proof of Theorem \ref{t2}, these two equations
are satisfied if and only if $\bar x$ solves equation (\ref{e10}).

Let us calculate the limit of $\bar x$ when $\rho$ approaches
zero. One can easily show that, for any $x>0$,
  $$\lim_{\rho\to 0} H(x)=0~~\mbox{ and }~~ \lim_{\rho\to 0}\frac{H(x)}{\rho}=\frac{x^2(1-b)}{a}\left[\frac{\lambda(b^2-1)}{2}+\frac{x^{-\alpha}(1-b^{2-\alpha})}{2-\alpha}\right].$$
Let
  \begin{equation}\label{e11pp}
\bar
x_0=\left[\frac{2(1-b^{2-\alpha})}{\lambda(1-b^2)(2-\alpha)}\right]^{1/\alpha},
  \end{equation}
i.e.
  $$\lim_{\rho\to 0}\frac{H(x)}{\rho}\left\{\begin{array}{cl}
>0, & \mbox{ if } x<\bar x_0,\\<0, & \mbox{ if } x>\bar x_0,\\=0, & \mbox{ if } x=\bar x_0.
  \end{array}\right.$$
Function $\frac{H(x)}{\rho}$ is continuous wrt $\rho$. Therefore,
for any small enough $\varepsilon>0$,
  $$\exists\delta>0:~\forall\rho\in(0,\delta)~~\frac{H(\bar x_0-\varepsilon)}{\rho}>0~~\mbox{ and }~~\frac{H(\bar x_0+\varepsilon)}{\rho}<0$$
meaning that $\bar x_\rho$, the solution to (\ref{e10}) at
$\rho\in(0,\delta)$, satisfies $\bar x_\rho\in(\bar
x_0-\varepsilon,\bar x_0+\varepsilon)$. This means $\lim_{\rho\to
0+}\bar x_\rho=\bar x_0$. Note that (\ref{e11pp}) is the optimal
threshold if we consider the long-run average reward with the same
reward rate $c(x)$.
\bigskip
\par\noindent\textit{Proof of Theorem \ref{t2}.}
\begin{proof}
(a) Firstly, let us prove that no more than one positive number
$\bar x$ can satisfy equations (\ref{e12}). If $\bar x$ satisfies
(\ref{e12}) then function $\tilde W$ canot have only one
increasing branch above function $v$ because two increasing
functions $\tilde W(x)$ and $\tilde W(bx)$ cannot have common
points.

The increasing part I of function $\tilde W(x)$ cannot intersect
with $\tilde W(bx)$.

The strictly decreasing part II of function $\tilde W(x)$ cannot
intersect with the parts II and III of function $\tilde W(bx)$.
Possible common points with the part I  of $\tilde W(bx)$ are of
no interest because here $\frac{d\tilde W(x)}{dx}<0$ and
$\frac{d\tilde W(bx)}{dx}\ge 0$.

The increasing part III of function $\tilde W(x)$ can intersect
with the parts I and II of function $\tilde W(bx)$, but again the
latter case is of no interest because here $\frac{d\tilde
W(x)}{dx}\ge 0$ and $\frac{d\tilde W(bx)}{dx}<0$.

Thus, the only possibility to satisfy (\ref{e12}) is the case when
the increasing part III of $\tilde W(x)$ touches the increasing
part I of function $\tilde W(bx)$. The inflection line $v(x)$ is
located between the increasing and decreasing branches of the
function $z(x)$, so that the part III of $\tilde W(x)$ is convex
and the part I of $\tilde W(bx)$ is concave, meaning that no more
than one point $\bar x$ can satsify the equations (\ref{e12}).

Using formula (\ref{e7}), the equations (\ref{e12}) can be
rewritten as follows:
\begin{eqnarray}
0&=&\tilde W(x)-\tilde W(bx) = \left( e^{\frac{\rho x}{a}}- e^{\frac{b\rho x}{a}}\right)\left[ e^{-\frac{\rho}{a}}\left(\frac{\lambda}{\rho}+\frac{\lambda a}{\rho^2}+\frac{1}{\rho(\alpha-1)}+\tilde w_1\right)\right. \nonumber \\
&&\left. -\frac{1}{\rho}\int_1^x e^{-\frac{\rho u}{a}} u^{-\alpha} du\right]-\frac{x^{1-\alpha}(1-b^{1-\alpha})}{\rho(\alpha-1)}-(1-b)\frac{\lambda x}{\rho}+\frac{e^{\frac{b\rho x}{a}}}\rho}\int_x^{bx} e^{-\frac{\rho u}{a} u^{-\alpha} du; \label{e13}\\
0&=&\frac{d\tilde W(x)}{dx}-\frac{d\tilde W(bx)}{dx}=\left( \frac{\rho}{a} e^{\frac{\rho x}{a}}- \frac{b\rho}{a}e^{\frac{b\rho x}{a}}\right)\left[ e^{-\frac{\rho}{a}}\left(\frac{\lambda}{\rho}+\frac{\lambda a}{\rho^2}+\frac{1}{\rho(\alpha-1)}+\tilde w_1\right)\right. \nonumber\\
&&\left.-\frac{1}{\rho}\int_1^x e^{-\frac{\rho u}{a}} u^{-\alpha}
du\right]-(1-b)\frac{\lambda}{\rho} +\frac{b e^{\frac{b\rho
x}{a}}}{a}\int_x^{bx}e^{-\frac{\rho u}{a}} u^{-\alpha} du.
\nonumber
\end{eqnarray}
After we multiply these equations by factors
$\left(1-be^{(b-1)\frac{\rho x}{a}}\right)$ and
$\frac{a}{\rho}\left(1-e^{(b-1)\frac{\rho x}{a}}\right)$
correspondingly and subtract the equations, the variable $\tilde
w_1$ is cancelled and we obtain equation
\begin{eqnarray*}
0&=&\left(1-b e^{(b-1)\frac{\rho
x}{a}}\right)\left[\frac{b^{1-\alpha}x^{1-\alpha}}{\rho(\alpha-1)}+\frac{e^{\frac{b\rho
x}{a}}}{\rho}\int_x^{bx} e^{\frac{-\rho u}{a}} u^{-\alpha} du-
\frac{x^{1-\alpha}}{\rho(\alpha-1)}-(1-b)\frac{\lambda x}{\rho}\right]\\
&&-\frac{a}{\rho}\left(1- e^{\frac{(b-1)\rho
x}{a}}\right)\left[\frac{b e^{\frac{b\rho x}{a}}}{a}\int_x^{bx}
e^{-\frac{\rho u}{a}} u^{-\alpha}
du-(1-b)\frac{\lambda}{\rho}\right]
\end{eqnarray*}
which is equivalent to $H(x)=0$.

Equation (\ref{e11p}) follows directly from the first of equations
(\ref{e13}): if we know the value of $x$ (equal $\bar x$), we can
compute the value of $\tilde w_1=w_1$.

To prove the solvability of the equation (\ref{e10}) we compute
the following limits:
\begin{eqnarray*}
\lim_{x\to\infty} H(x)&\le& -\lim_{x\to\infty} e^{\frac{\rho x(1-b)}{a}}\cdot \lambda x(1-b)=-\infty;\\
\lim_{x\to 0} H(x)&=& \lim_{x\to 0}
(b-1)\left[\frac{x^{1-\alpha}(1-b^{1-\alpha})}{\alpha-1}+\int_{bx}^x
e^{-\frac{\rho u}{a}} u^{-\alpha} du \right],
\end{eqnarray*}
and the positive expression in  the square brackets does not
exceed
  $$\frac{x^{1-\alpha}(1-b^{1-\alpha})}{\alpha-1}+\int_{bx}^x\left[1-\frac{\rho u}{a}+\frac{1}{2}\left(\frac{\rho u}{a}\right)^2\right] u^{-\alpha} du=\frac{x^{1-\alpha}(1-b^{1-\alpha})}{\alpha-1}$$
  $$+\left[\frac{u^{1-\alpha}}{1-\alpha}-\frac{\rho u^{2-\alpha}}{a(2-\alpha)}+\frac{\rho^2 u^{3-\alpha}}{2a^2(3-\alpha)}\right]_{bx}^x=\left[\frac{\rho^2 u^{3-\alpha}}{2a^2(3-\alpha)}-
\frac{\rho u^{2-\alpha}}{a(2-\alpha)}\right]_{bx}^x\to 0 \mbox{ as
} x\to 0,$$ so that $\lim_{x\to 0} H(x)=0$.

Finally,
\begin{eqnarray*}
\frac{dH}{dx}&=&-\frac{\rho(1-b)}{a}e^{\frac{\rho x(1-b)}{a}}\left[\frac{x^{1-\alpha}(1-b^{1-\alpha})}{\alpha-1}+\lambda x(1-b)\right]\\
&&+\left(b-e^{\frac{\rho x(1-b)}{a}}\right)\left[\lambda(1-b) -x^{-\alpha}(1-b^{1-\alpha})\right]+\lambda(1-b)^2 e^{\frac{\rho x(1-b)}{a}}\\
&&-\frac{\rho}{a}(1-b)e^{\frac{\rho x}{a}}\int_{bx}^x e^{-\frac{\rho u}{a}} u^{-\alpha} du-(1-b) e^{\frac{\rho x}{a}}\left[ e^{-\frac{\rho x}{a}} x^{-\alpha}-b e^{-\frac{ \rho bx}{a}} (bx)^{-\alpha}\right]\\
&=& -\frac{\rho(1-b)(1-b^{1-\alpha})}{a(\alpha-1)} x^{1-\alpha}-\left(b-1-\frac{\rho x(1-b)}{a}\right)\left[(1-b^{1-\alpha}) x^{-\alpha}-\lambda(1-b)\right]\\
&&+\lambda(1-b)^2-\frac{\rho}{a}(1-b)\frac{x^{1-\alpha}(1-b^{1-\alpha})}{1-\alpha}\\
&&-(1-b)\left(1+\frac{\rho
x}{a}\right)\left[x^{-\alpha}\left(1-\frac{\rho
x}{a}\right)-b\left(1-\frac{\rho
bx}{a}\right)(bx)^{-\alpha}\right]+O(x),
\end{eqnarray*}
where $\lim_{x\to 0} O(1)=0$; so
  $$\lim_{x\to 0}\frac{dH}{dx}=\lim_{x\to 0}\frac{\rho(1-b)(1-b^{2-\alpha})}{a} x^{1-\alpha}=+\infty$$
meaning that the continuous function $H(x)$ increases from zero
when $x\approx 0$ and becomes negative for big values of $x$.

Therefore, equation (\ref{e10}) has a single positive solution
$\bar x$.

(b) Item (i) of Condition \ref{c1} is obviously satisfied (${\cal
D}=\emptyset$).

For Item (ii), we consider the following three cases.

($\alpha$) Let $0<x\le \bar x$. The differential equation
(\ref{e6p}) holds for function $W$ on the interval $0<x<\bar x$.
For these values of $x$,
  \begin{equation}\label{e14}
\mbox{ for any } i\ge 1,~~~W(b^ix)<W(x).
\end{equation}
To prove this, note that function $W(bx)=\tilde W(bx)$ is
increasing (Fig,\ref{figure1}), so that $W(bx)>W(b^2x)>\ldots$ .
Part III of the function $W(x)$ is convex and function $\tilde
W(bx)$ touching smoothly $W(x)$ at point $\bar x$, is concave, so
that $W(bx)=\tilde W(bx)<W(x)$ here. The same inequality holds for
smaller values of $x$ where $W(x)$ decreases (part II) and
$W(bx)=\tilde W(bx)$ increases. Part I of the function $W(x)$ is
obviously bigger than $W(bx)$, too. Thus the Bellman equation
(\ref{e6}) is satisfied on the interval $0<x<\bar x$ and also on
the interval $(0,\bar x]$.

($\beta$) Consider $x\in(\bar x,\bar x/b]$ and denote $x_1$ and
$x_2$ the points of the analytical maximum and minimum of the
function $W(x)=\tilde W(bx)$. (See Fig.\ref{figure1}.)

For $x\in(\bar x,x_1)$ the function $W(x)$ is concave; hence
  $$a\left.\frac{dW}{dx}\right|_x<a\left.\frac{dW}{dx}\right|_{\bar x}=a\left.\frac{d\tilde W}{dx}\right|_{\bar x}=\rho\tilde W(\bar x)-\frac{\bar x^{1-\alpha}}{1-\alpha}+\lambda\bar x=\rho[\tilde W(\bar x)-z(\bar x)].$$
(See formula (\ref{e9p}).) Since $W(x)$ increases starting from
$W(\bar x)=\tilde W(\bar x)$ and $z(x)$ decreases, we have
  $$\left.\frac{dW}{dx}\right|_{x}<\rho[W(x)-z(x)]=\rho W(x)-\left(\frac{x^{1-\alpha}}{1-\alpha}-\lambda x\right),$$
and the Bellman equation (\ref{e6}) is satisfied because here
$W(x)=\tilde W(bx)=W(bx)$ and $W(b^{i+1} x)<W(bx)$ for all $i\ge
1$ because of (\ref{e14}).

For $x\in[x_1,x_2]$ we have $W(x)>z(x)$ and $a\frac{dW}{dx}\le 0$:
remember, $W(x)=\tilde W(bx)$ and the latter function is of type
II for $x\in[x_1,x_2]$. Therefore, again
  $$a\frac{dW}{dx}-\rho\left[W(x)-\frac{1}{\rho}\left(\frac{x^{1-\alpha}}{1-\alpha}-\lambda x\right)\right]<0$$
and the Bellman equation (\ref{e6}) is satisfied.

For $x\in(x_2,\bar x/b]$, we have
  $$a\frac{dW}{dx}=b\left.\frac{d\tilde W}{dx}\right|_{bx}<\left.\frac{d\tilde W}{dx}\right|_{bx}$$
because function $\tilde W$ increases here and $b\in(0,1)$. Next,
  $$\rho[W(x)-z(x)]=\rho[\tilde W(bx)-z(x)]>\rho[\tilde W(bx)-z(bx)]$$
because the function $z(x)$ decreases. Therefore,
  $$a\frac{dW}{dx}-\rho[W(x)-z(x)]<\left.\frac{d\tilde W}{dx}\right|_{bx}-\rho[\tilde W(bx)-z(bx)]=0:$$
$bx\le\bar x$, and, for these values, equation (\ref{e6p}) holds.
We see that the Bellman equation (\ref{e6}) is satisfied.

($\gamma$) Suppose
  $$a\frac{dW}{dx}-\rho[W(x)-z(x)]<0 \mbox{ for } x\in\left(\frac{\bar x}{b^{i-1}},\frac{\bar x}{b^i}\right],$$
for some natural $i\ge 1$. Then, for $x\in\left(\frac{\bar
x}{b^i},\frac{\bar x}{b^{i+1}}\right]$, we have
  $$a\frac{dW}{dx}-\rho[W(x)-z(x)]=ba\left.\frac{dW}{dx}\right|_{bx}-\rho[W(bx)-z(x)].$$
If $\left.\frac{dW}{dx}\right|_{bx}<0$ then the last expression is
negative. Otherwise,
  $$ba\left.\frac{dW}{dx}\right|_{bx}\le a\left.\frac{dW}{dx}\right|_{bx} \mbox{ and } z(x)<z(bx),$$
 so that
  $$a\frac{dW}{dx}-\rho[W(x)-z(x)]<a\left.\frac{dW}{dx}\right|_{bx}-\rho[W(bx)-z(bx)]<0$$
by the induction supposition.

The Bellman equation (\ref{e6}) is satisfied for all $x>0$.

Item (iii) of condition \ref{c1} is also obviously satisfied:
  $${\cal L}^*=[\bar x,\infty);~~v^{f,{\cal L}^*}(x)=v_i \mbox{ if } x\in\left[\frac{\bar x}{b^{i-1}},\frac{\bar x}{b^i}\right).$$

(c) Note that item (iv) of Condition \ref{c1} is not satisfied.
Indeed, there is an admissible control such that, on any time
interval $(T-1,T]$, $x^\pi(t)$ is so close to zero that $e^{-\rho
T}W(x^\pi(t))<-1$. (Remember that $\lim_{x\to 0} W(x)=-\infty$.)

Let us fix an arbitrary $x_0>0$ and modify the reward rate:
  $$\hat c(x)=\left\{\begin{array}{ll}
c(x), & \mbox{ if } x\ge\min\{x_0,b\bar x\}:=\hat x; \\
c(\hat x), & \mbox{ if } x<\hat x.
  \end{array}\right. $$
Note that $\hat c\ge c$. The function $\tilde W(x)$ given by
(\ref{e7}) will change only for $x<\hat x\le x_0$ and remains
increasing in its part I, meaning that this modified function
$\hat W$ satisfies all items (i)--(iii) of Condition \ref{c1}: the
proof is identical to the one presented above. But now Condition
\ref{c1} (iv) is also satisfied because the function $\hat W$ is
bounded. Therefore, according to Theorem \ref{t1}, $\sup_\pi \hat
J(x_0,\pi)=\hat W(x_0)=\hat J(x_0,\pi^*)$, where $\hat J$
corresponds to the reward rate $\hat c$. But
  $$\sup_\pi {J}_\rho(x_0,\pi)\le \sup_\pi \hat J(x_0,\pi)=\hat W(x_0)=W(x_0),$$ and for the feedback policy $\pi^*$, which is independent of $x_0$, we have
  $$W(x_0)=\hat W(x_0)=\hat J(x_0,\pi^*)={J}_\rho(x_0,\pi^*).$$
The last equality holds because, under the feedback policy
$\pi^*$, starting from $x_0$, the trajectory $x^{\pi^*}(t)$
satisfies $x^{\pi^*}(t)\ge \hat x$ for all $t\ge 0$, and in this
region $\hat c=c$.
\end{proof}

\begin{remark}
The above two theorems assert that if the sending rate is smaller
than $\bar{x}$, then do not send any congestion notification, while if the sending
rate is greater or equal to $\bar{x},$ then send (multiple, if
needed) congestion notifications until the sending rate is reduced to some level below
$\overline{x}$ with $\overline{x}$ given by (\ref{eq:threshold})
under the average criterion and by Theorem \ref{t2}(a) under the
discounted criterion. This defines our proposed threshold-based AQM scheme.
\end{remark}

\section{Conclusion}\label{con}
To sum up, in this paper, we studied optimal impulsive control problems
on infinite time interval with both discounted and time average criteria.
We have established Bellman equations and provided conditions for the
verification of canonical triplet. Our general results are then applied
to construct a novel AQM scheme, which takes into account not only the
traffic transiting through the bottleneck links but also the congestion
control algorithms operating at the edges of the network. We are currently
working on practical aspects of the proposed scheme and its validation.
Preliminary results indicate that the new scheme improves fairness significantly
with respect to alternative solutions like the RED algorithm.

\section*{Acknowledgement}
This work is partially funded by INRIA Alcatel-Lucent Joint Lab,
ADR ''Semantic Networking''.
\bibliographystyle{ieeetr}
\bibliography{tcp}

\begin{thebibliography}{10}

\bibitem{Floyd03}
S.~Floyd, ``High speed {TCP} for large congestion windows,'' {\em IETF RFC
  3649, Experimental}, December, 2003.

\bibitem{TKelly04}
T.~Kelly, ``Scalable {TCP}: Improving performance in high-speed wide area
  networks,'' {\em Comput. Commun. Rev.}, vol.~33(2), pp.~83--91, 2003.

\bibitem{Leith04}
D.~Leith and R.~Shorten, ``{H-TCP}: {TCP} for high-speed and long-distance
  networks,'' {\em In Proceedings of PFLDnet 2004}, 2004.

\bibitem{Wei06}
D.~X. Wei, C.~Jin, S.~H. Low, and S.~Hegde, ``{FAST TCP}: motivation,
  architecture, algorithms, performance,'' {\em IEEE/ACM Trans. Netw.},
  vol.~14, pp.~1246--1259, Dec. 2006.

\bibitem{AAP05}
E.~Altman, K.~Avrachenkov, and B.~Prabhu, ``Fairness in {MIMD} congestion
  control algorithms,'' {\em Telecommunication Systems}, vol.~30, no.~4,
  pp.~387--415, 2005.

\bibitem{Moller07}
N.~M\"{o}ller, C.~Barakat, K.~Avrachenkov, and E.~Altman, ``Inter-protocol
  fairness between {TCP} {N}ew {R}eno and {TCP} {W}estwood,'' in {\em Next
  Generation Internet Networks, 3rd EuroNGI Conference on}, pp.~127--134, 2007.

\bibitem{Li07}
Y.-T. Li, D.~Leith, and R.~Shorten, ``Experimental evaluation of tcp protocols
  for high-speed networks,'' {\em Networking, IEEE/ACM Transactions on},
  vol.~15, no.~5, pp.~1109--1122, 2007.

\bibitem{RED}
S.~Floyd and V.~Jacobson, ``Random early detection gateways for congestion
  avoidance,'' {\em IEEE/ACM Trans. Netw.}, vol.~1, pp.~397--413, Aug. 1993.

\bibitem{Adams}
R.~Adams, ``Active queue management: A survey,'' {\em Communications Surveys
  Tutorials, IEEE}, vol.~15, no.~3, pp.~1425--1476, 2013.

\bibitem{Miller03}
B.~M. Miller and E.~Y. Rubinovich, ``Impulsive control in continuous and
  discrete-continuous systems,'' {\em Springer}, 2003.

\bibitem{Mo00}
J.~Mo and J.~Walrand, ``Fair end-to-end window-based congestion control,'' {\em
  IEEE/ACM Trans. on Networking}, vol.~8, pp.~556--567, 2000.

\bibitem{Kunniyur03}
S.~Kunniyur and R.~Srikant, ``End-to-end congestion control schemes: utility
  functions, random losses and ecn marks,'' {\em IEEE/ACM Trans. Netw.},
  vol.~11, pp.~689--702, Oct. 2003.

\bibitem{Kelly98}
F.~P. Kelly, A.~K. Maulloo, and D.~Tan, ``Rate control for communication
  networks: shadow prices, proportional fairness and stability,'' {\em Journal
  of the Operational Research Society}, vol.~49, pp.~237--252, 1998.

\bibitem{Low99}
S.~H. Low and D.~E. Lapsley, ``Optimization flow control-i: basic algorithm and
  convergence,'' {\em IEEE/ACM Trans. on Networking}, vol.~7, pp.~861--874,
  1999.

\bibitem{Noirie09}
L.~Noirie, E.~Dotaro, G.~Carofiglio, A.~Dupas, P.~Pecci, D.~Popa, and G.~Post,
  ``Semantic networking: Flow-based, traffic-aware, and self-managed
  networking,'' {\em Bell Labs Technical Journal}, vol.~14, no.~2, pp.~23--38,
  2009.

\bibitem{A10}
K.~Avrachenkov, U.~Ayesta, and A.~Piunovskiy, ``Convergence of trajectories and
  optimal buffer sizing for {AIMD} congestion control,'' {\em Performance
  Evaluation}, vol.~67, pp.~501--527, 2010.

\bibitem{Yi10}
Y.~Zhang, A.~Piunovskiy, U.~Ayesta, and K.~Avrachenkov, ``Convergence of
  trajectories and optimal buffer sizing for {MIMD} congestion control,'' {\em
  Computer Communications}, vol.~33, pp.~149--159, 2010.

\bibitem{Hespanha01}
J.~P. Hespanha, S.~Bohacek, K.~Obraczka, and J.~Lee, ``Hybrid modeling of {TCP}
  congestion control,'' {\em Hybrid Systems: Computation and Control, LNCS v.
  2034}, pp.~291--304, 2001.

\bibitem{b1}
I.~Gradshteyn and I.~Ryzhik, ``Table of integrals, series, and products,'' {\em
  Academic Press}, 2007.

\end{thebibliography}

\end{document}